\documentclass[a4paper,10pt]{article}

\usepackage[utf8]{inputenc}
\usepackage[T1]{fontenc}

\usepackage{a4wide}

\usepackage{amsmath,amssymb}
\usepackage{amsthm}
\usepackage{enumerate} 

\usepackage{authblk}

\usepackage[svgnames]{xcolor}
\definecolor{highlightNEW}{named}{black}

\usepackage{hyperref}
\hypersetup{
    colorlinks=true,
    linkcolor=Navy,
    citecolor=Navy,
    urlcolor=Navy
}

\usepackage{graphicx}
\graphicspath{{fig/}}
\usepackage{epstopdf}
\usepackage{subcaption}

\usepackage{booktabs}
\usepackage{multirow}

\usepackage[absolute,overlay,showboxes]{textpos}
\setlength{\TPHorizModule}{1in}
\setlength{\TPVertModule}{1in}
\setlength{\TPboxrulesize}{0.8pt}
\TPMargin{2mm}
\textblockrulecolour{Navy}

\usepackage[round]{natbib}
\bibpunct[, ]{(}{)}{;}{a}{}{,}

\setlength{\bibsep}{0pt}

\newtheorem{theorem}{Theorem}[section] 
\newtheorem{corollary}[theorem]{Corollary} 
 
\newtheorem{lemma}[theorem]{Lemma} 
 
\newtheorem{remark}[theorem]{Remark}

\newcommand{\doi}[1]{DOI~\href{\detokenize{http://dx.doi.org/#1}}{\detokenize{#1}}}

\newdimen\CdotAxis
\newcommand*{\CdotAux}[3]{%
  {%
    \settoheight\CdotAxis{$#2\vcenter{}$}%
    \sbox0{%
      \raisebox\CdotAxis{%
        \scalebox{#1}{%
          \raisebox{-\CdotAxis}{%
            $\mathsurround=0pt #2#3$%
          }%
        }%
      }%
    }%
    \dp0=0pt %
    \sbox2{$#2\bullet$}%
    \ifdim\ht2<\ht0 %
      \ht0=\ht2 %
    \fi
    \sbox2{$\mathsurround=0pt #2#3$}%
    \hbox to \wd2{\hss\usebox{0}\hss}%
  }%
}
\makeatletter
\def\mathcolor#1#{\@mathcolor{#1}}
\def\@mathcolor#1#2#3{%
  \protect\leavevmode
  \begingroup
    \color#1{#2}#3%
  \endgroup
}
\makeatother

\let\oldalpha\alpha
\renewcommand{\alpha}{\mathcolor{highlightNEW}{\oldalpha}}
\newcommand{\ccode}[2]{\par
        \vspace*{8pt}
        {{\leftskip18pt\rightskip\leftskip
        \noindent{\it #1}\/: #2\par}}\par}
\newcommand{\keywords}[1]{\ccode{Keywords}{#1}}
\newcommand{\email}[1]{\href{mailto:#1}{#1}}

\def\received#1{Received~#1\par}
\def\revised#1{Revised~#1\par}
\def\accepted#1{Accepted~#1\par}

\def\foliofont{\fontsize{8}{10}\selectfont}


\newcommand{\jpTitle}{Decomposition formula for jump diffusion models}
\newcommand{\jpAuthors}{R. Merino, J. Posp\'{\i}\v{s}il, T. Sobotka, and J. Vives}
\newcommand{\jpKeywords}{option pricing; stochastic volatility models; jump diffusion models; implied volatility}
\newcommand{\jpMSC}{60G51; 91G20; 91G60}
\newcommand{\jpJEL}{G12; C58; C63}
\newcommand{\jpDateReceived}{2 March 2018}

\newcommand{\jpDate}{}
\newcommand{\jpPreprint}{Preprint }

\author[1,3]{Ra\'{u}l Merino}
\author[2]{Jan Posp\'{\i}\v{s}il\thanks{Corresponding author, \email{honik@kma.zcu.cz}}}
\author[2]{Tom\'{a}\v{s} Sobotka}
\author[1]{Josep Vives}
\affil[1]{Facultat de Matem\`{a}tiques i Inform\`{a}tica, Universitat de Barcelona, \authorcr Gran Via 585, 08007 Barcelona, Spain,\vspace*{3pt}}
\affil[2]{NTIS - New Technologies for the Information Society, Faculty of Applied Sciences, \authorcr University of West Bohemia, Univerzitn\'{\i} 8, 301 00 Plze\v{n}, Czech Republic,\vspace*{3pt}}
\affil[3]{VidaCaixa S.A., Investment Risk Management Department, \authorcr C/Juan Gris, 2-8, 08014 Barcelona, Spain.}

\ifpdf
\hypersetup{
  pdftitle={\jpTitle},
  pdfauthor={\jpAuthors},
  pdfkeywords={\jpKeywords},
  pdfinfo={
      MSC={\jpMSC},
      JEL={\jpJEL}
  }
}
\fi

\title{\textcolor{Navy}{\textsc{\jpTitle}}}
\date{\jpDate}

\begin{document}

\maketitle
\begin{textblock}{6}(1.25,1)
{\foliofont\noindent \jpPreprint of an article published in International Journal of Theoretical and Applied Finance \\ 
Vol. 21, No. 8 (2018) 1850052, \doi{10.1142/S0219024918500528} \\
\textcopyright World Scientific Publishing Company, \href{https://www.worldscientific.com/ijtaf}{www.worldscientific.com/ijtaf}
}
\end{textblock}

\begin{center}
\received{\jpDateReceived}
\revised{28 September 2018}
\accepted{1 October 2018}
\end{center}

\begin{abstract}
In this paper we derive a generic decomposition of the option pricing formula for models with finite activity jumps in the underlying asset price process (SVJ models). This is an extension of the well-known result by \cite{Alos12} for \cite{Heston93} SV model. Moreover, explicit approximation formulas for option prices are introduced for a popular class of SVJ models - models utilizing a variance process postulated by \cite{Heston93}. In particular, we inspect in detail the approximation formula for the \cite{Bates96} model with log-normal jump sizes and we provide a numerical comparison with the industry standard - Fourier transform pricing methodology. For this model, we also reformulate the approximation formula in terms of implied volatilities. The main advantages of the introduced pricing approximations are twofold. Firstly, we are able to significantly improve computation efficiency (while preserving reasonable approximation errors) and secondly, the formula can provide an intuition on the volatility smile behaviour under a specific SVJ model.

\end{abstract}

\keywords{\jpKeywords}
\ccode{MSC classification}{\jpMSC}
\ccode{JEL classification}{\jpJEL}

\allowdisplaybreaks
\section{Introduction}\label{sec:introduction}

The main problem of the Black-Scholes option pricing model is the assumption of constant volatility for the underlying stock price process. In practice, this model is used as a marking model to quote implied volatilities instead of traded option prices. Contrary to the model assumptions, the implied volatilities observed in the vanilla option markets are not flat - they typically exhibit a non-zero skew and a convex smile-like shape in the moneyness dimension.  To correctly capture the shape of implied volatility surfaces, various stochastic volatility (SV) models were developed. These models assume that not only the spot prices are stochastic, but also their volatility is driven by a suitable stochastic process. Another way how to deal with drawbacks of the Black-Scholes model is to add a jump term to the stock price process. This results into the jump diffusion setting, which was originally studied by \cite{Merton76}. In this article, we build an option price approximation framework for a popular class of financial models that utilize both of the aforementioned ideas. Hence, the main objects of our study are stochastic volatility jump diffusion (SVJ) models.

The first SVJ model is credited to \cite{Bates96} who incorporated a stochastic variance process postulated by \cite{Heston93} alongside \cite{Merton76} - style jumps. The variance of stock prices follows a CIR process \citep{CIR85} and the stock prices themselves are assumed to be of a jump diffusion type with log-normal jump sizes. In particular, this model should improve the market fit for short-term maturity options, while the original \cite{Heston93} approach would often need unrealistically high volatility of variance parameter to fit reasonably well the short-term smile \citep{Bayer15,MrazekPospisilSobotka16ejor}. An SVJ model with a non-constant interest rate was introduced by \cite{Scott97}. Several other authors studied SVJ models that have a different distribution for jump sizes, e.g. \cite{YanHanson06} utilized log-uniform jump amplitudes.

Naturally, one can extend SVJ models by adding jumps into the variance process (e.g. a model introduced by \cite{Duffie00}). However, based on several empirical studies, these models tend to overfit market prices and despite having more parameters than the original \cite{Bates96} model they might not provide a better calibration errors (see e.g. \cite{Gatheral06}). Another way to improve standard SV models might be to introduce time-dependent model parameters. The \cite{Heston93} model with time-dependent parameters was studied by \cite{MikhailovNogel03} for piece-wise constant parameters, by
\cite{Elices08} for a linear dependence and a more general modification was introduced by \cite{Benhamou10}. These approaches involve several additional parameters and might also suffer from overfitting. Moreover, \cite{Bayer15} mentioned that these models do not fully comply with properties of observable market data - a general overall shape of the volatility surface typically does not change in time and hence the option prices should be derived using a time-homogeneous stochastic process. 

The valuation of derivatives under these more complex models is, of course, a more elaborate task compared to the standard Black-Scholes model. Many authors have introduced semi-closed form formulas using various transformation techniques of the pricing partial (integro) differential equations, to name a few: \cite{Heston93}, \cite{Bates96}, \cite{Scott97}, \cite{Lewis00}, \cite{Albrecher07}, \cite{BaustianMrazekPospisilSobotka17asmb} and many others. Although transform pricing methods are typically efficient tools to evaluate non-path dependent derivatives, they do not provide any intuition on the smile behavior. Moreover, calibration routines utilizing these methods lead typically to non-convex optimization problems (see e.g. \cite{MrazekPospisilSobotka16ejor}). 

Other authors considered approximation techniques that were pioneered by \cite{HullWhite87}. In the last years, the \cite{HullWhite87} pricing formula was reinvented using techniques of the Malliavin calculus, because a future average volatility that is used in the formula is a non adapted stochastic process. In \cite{Alos06}, \cite{Alos07} and \cite{Alos08}, a general jump diffusion model with no prescribed volatility process is analyzed. There have been several extensions thereof, e.g. by assuming Lévy processes in \cite{JafariVives13}, see also the survey in \cite{Vives16}.
 
In \cite{Alos12}, a new approach of dealing with the Hull and White formula and the Heston model has been proposed. The main idea of this approach is to use an adapted projection for the future volatility. The formula provides a valuable intuition on the behavior of smiles and term structures under the Heston model. This is not a purely theoretical result - it can significantly fasten/improve the calibration process by providing a good initial guess by analytical calibration or by specifying a region where calibrated parameters should lie in as it is done in \cite{Alos15}. In \cite{MerinoVives15}, the idea of \cite{Alos12} has been used to find a general decomposition formula for any stochastic volatility process satisfying basic integrability conditions.

In the present paper, we apply the same set of ideas and we extend them to the domain of SVJ models with finite activity jumps. This should serve not only to find a more efficient way to price vanilla options compared to transform pricing methods (see Section \ref{sec:heston}), but as a side product we provide a similar intuition of the smile behavior for the studied SVJ model.

In particular, we start by finding a generic decomposition formula for a vanilla call option price and an approximation for both the price and implied volatility under a specific SVJ model. Explicit pricing formulas are provided for one of the most popular SVJ models - \cite{Heston93} type models with compound Poisson process in the stock price evolution. To assess the accuracy and efficiency of the newly derived solution, we perform a numerical comparison for the \cite{Bates96} model (i.e. log-normal jump sizes) alongside its Fourier transform pricing formula introduced by \cite{BaustianMrazekPospisilSobotka17asmb}.

The structure of the paper is as follows.
In Section \ref{sec:preliminaries}, we give basic preliminaries and our notation related to SVJ models. This notation will be used throughout the paper without being repeated in particular theorems, unless we find useful to do so in order to guide the reader through the results. 
In Sections \ref{sec:Generic_decomposition} and \ref{sec:model}, we derive decomposition formulas for SV and SVJ models, respectively, generalizing the decomposition formula obtained by \cite{Alos12}. Newly obtained decomposition is rather versatile since it does not need to specify the underlying volatility process. 
Particular approximation formulas for several SVJ models are presented in Section \ref{sec:heston} alongside the numerical comparison for the \cite{Bates96} model. The decomposition result in terms of implied volatilities is introduced in Section \ref{sec:implied}.
A discussion of the results is provided in Section \ref{sec:conclusion} and technical error estimates are presented in 
\ref{sec:appendix}. 

\section{Preliminaries and notation}\label{sec:preliminaries}

Let $S=\{S_{t}, t\in [0,T]\}$ be a strictly positive price process under a market chosen risk neutral probability that follows the model:  

\begin{eqnarray}\label{model}
 dS_{t}= rS_tdt + \sigma_{t}S_{t}  \left(\rho dW_{t}  + \sqrt{1-\rho^{2}}d\tilde{W}_{t} \right) + S_{t-}dZ_{t},
\end{eqnarray}
where $S_{0}$ is the current price, $W$ and $\tilde{W}$ are independent Brownian motions, $r$ is the interest rate, $\rho \in (-1,1)$ is the correlation between the two Brownian motions and 

$$Z_t=\int_0^t \int_{{\mathbb R}} (e^y-1){\tilde N}(ds, dy)$$
where $N$ and ${\tilde N}$ denote the Poisson measure and the compensated Poisson measure, respectively. We can associate to measure $N$ a compound Poisson process $J$, independent of $W$ and $\tilde{W}$, with intensity $\lambda \geq 0$ and jump amplitudes given by random variables $Y_i$, independent copies of a random variable $Y$ with law given by $Q$.  Recall that this compound Poisson process can be written as 

$$J_t:=\int_0^t \int_{\mathbb R} y N(ds,dy)=\sum_{i=1}^{n_t} Y_i,$$ 
where $n_t$ is a $\lambda-$ Poisson process. Denote by $k:={\mathbb E}_{Q}(e^Y-1).$

Without any loss of generality, it will be convenient in the following sections, to use as underlying process, the log-price process 
$X_{t} = \log S_{t}, t\in[0,T]$, that satisfies

\begin{eqnarray}\label{log-model}
dX_{t}= \left(r - \lambda k-\frac{1}{2}\sigma^{2}_{t}\right)dt + \sigma_{t} \left(\rho dW_{t}  + \sqrt{1-\rho^{2}}d\tilde{W}_{t} \right) + dJ_{t}.
\end{eqnarray}
We introduce also the corresponding continuous process,
\begin{eqnarray}\label{log-model-continious}
d\tilde{X}_{t}= \left(r - \lambda k -\frac{1}{2}\sigma^{2}_{t}\right)dt + \sigma_{t} \left(\rho dW_{t}  + \sqrt{1-\rho^{2}}d\tilde{W}_{t} \right).
\end{eqnarray}

The volatility process $\sigma$ is a square-integrable process assumed to be adapted to the filtration generated by $W$ and $J$ and its trajectories are assumed to be a.s. square integrable, c\`adl\`ag and strictly positive a.e. 

\begin{remark}
Observe that this is a very general stochastic volatility model. We can consider the following particular cases:
\begin{itemize}

\item If $\sigma$ is constant and we have finite activity jumps, we have a generic jump-diffusion model as for example the Merton model. In the particular case of 
$\sigma=0$ we have an exponential L\'evy model.

\item If we assume no jumps, that is $\lambda=0$, we have a generic stochastic volatility diffusion model. This is the case treated in \cite{MerinoVives15}.

\item If in addition $\rho=0$ we have a generalization of different non correlated stochastic volatility diffusion models as \cite{HullWhite87}, \cite{Scott87}, \cite{SteinStein91} or \cite{BallRoma94}. 

\item If we assume no correlation but presence of jumps we cover for example the Heston-Kou model (e.g. see \cite{GulisashviliVives12}), or any uncorrelated model with the addition of finite activity Lévy jumps on the price process.  

\item Finally, if we have no jumps and $\sigma$ is constant, we have the classical Osborne-Samuelson-Black-Scholes model.
\end{itemize}
\end{remark}

The following notation will be used throughout the paper:

\begin{itemize}

\item We denote by ${\mathcal{F}}^W$, ${\mathcal{F}}^{\tilde{W}}$ and ${\mathcal{F}}^N$ the filtrations generated by the independent processes $W$, $\tilde{W}$ and $J$ respectively. 
Moreover, we define $\mathcal{F}:= {\mathcal{F}}^{W} \vee {\mathcal{F}}^{\tilde{W}}\vee {\mathcal{F}}^{N}$. 

\item We will denote by $BS(t,x,y)$ the price of a plain vanilla European call option under the classical Black-Scholes model with constant volatility $y$, current log stock price $x$, time to maturity $\tau=T-t$, strike price $K$ and interest rate $r$. In this case, 
\begin{eqnarray}
\nonumber BS\left(t,x, y\right)= e^{x} \Phi(d_{+}) - K e^{-r\tau} \Phi(d_{-}),
\end{eqnarray}
where $\Phi(\cdot)$ denotes the cumulative distribution function of the standard normal law and 
\begin{eqnarray}
\nonumber d_{\pm} = \frac{x-\ln K + (r \pm \frac{y^{2}}{2})\tau}{y\sqrt{\tau}}.
\end{eqnarray}

\item In our setting, the call option price is given by 

$$V_{t}=e^{-r\tau}{\mathbb E}_t [(e^{X_{T}}-K)^+].$$

\item  Recall that from the Feynman-Kac formula for the model \eqref{log-model-continious}, the operator 
\begin{eqnarray}{\label{FK}}
\mathcal{L}_{\sigma}:= {\partial}_t + \frac{1}{2}\sigma_{t}^{2} {\partial^{2}_{x}} 
					+ \left(r -\lambda k-\frac{1}{2}\sigma^{2}_{t}\right) {\partial}_{x}-r
\end{eqnarray}
satisfies ${\mathcal L}_{\sigma}BS(t,\tilde{X}_{t},\sigma_{t})=0$.

\item We define the operators $\Lambda:=\partial_x$, $\Gamma:=\left(\partial^{2}_{x}-\partial_{x}\right)$ and $\Gamma^{2}=\Gamma\circ \Gamma.$ In particular, for the Black-Scholes formula we obtain: 
\begin{eqnarray*}
\Gamma BS(t,x,y)&:=& \frac{e^{x}}{y\sqrt{2\pi \tau}}\exp\left(-\frac{d_+^2(y)}{2}\right),\\
\Lambda \Gamma BS(t,x,y)&:=& \frac{e^{x}}{y \sqrt{2\pi \tau}}\exp\left(-\frac{d_+^2(y)}{2}\right)\left(1-\frac{d_+(y)}{y \sqrt{\tau}}\right),\\
\Gamma^2 BS(t,x,y)&:=& \frac{e^{x}}{y \sqrt{2\pi \tau}}\exp \left(-\frac{d_+^2(y)}{2}\right)\frac{d_+^2(y)-y d_+(y) \sqrt{\tau}-1}{y^2\tau}.
\end{eqnarray*}

\item We define $p_{n}(\lambda T)$ as the Poisson probability mass function with intensity $\lambda T$. I.e. $p_n$ takes the following form:
$$p_n(\lambda T):=\frac{e^{-\lambda T} (\lambda T)^n}{n!}.$$

\end{itemize}

\section{A generic SV decomposition formula}\label{sec:Generic_decomposition}

In this section, following the ideas of \cite{Alos12}, see also \cite{MerinoVives15}, we extend the decomposition formula to a generic stochastic volatility model. We recall that the formula is valid without having to specify the underlying volatility process explicitly, which enables us to obtain a very flexible decomposition formula. The formula proved in \cite{Alos12} is the particular case of the Heston model. 

It is well known that if the stochastic volatility process is independent of the price process, then the pricing formula of a plain vanilla European call is given by 

$$V_{t}={\mathbb E}_t [BS(t,S_{t},{\bar \sigma}_{t})]$$  
where ${\bar \sigma}^2_{t}$ is the so called average future variance and it is defined by 

\begin{eqnarray}
\nonumber {\bar\sigma}^2_{t}:=\frac{1}{T-t} \int^{T}_{t}\sigma^{2}_{s}ds.
\end{eqnarray}
Naturally, ${\bar\sigma}_{t}$ is called the average future volatility, see \cite{Fouque00}, page 51.   

The idea used in \cite{Alos12} consists of using an adapted projection of the average future variance 
\begin{eqnarray}
\nonumber v^2_{t}:={\mathbb E}_t({\bar\sigma}^2_{t})=\frac{1}{T-t} \int^{T}_{t}{\mathbb E}_t[\sigma^{2}_{s}]ds
\end{eqnarray}
to obtain a decomposition of $V_{t}$ in terms of $v_{t}.$ This idea switches an anticipative problem related with the anticipative process ${\bar\sigma}_{t}$ into a 
non-anticipative one related to the adapted process $v_{t}$. 

We define
\begin{equation}
M_{t}=\int^{T}_{0}\mathbb{E}_{t}\left[\sigma^{2}_{s}\right]ds,\label{def:M_t}
\end{equation}
and hence
$$dv^{2}_{t}= \frac{1}{T-t}\left[dM_{t}+\left(v^{2}_{t}-\sigma^{2}_{t}\right)dt\right].$$

Recall that $M$ is a martingale with respect the filtration generated by $W$ and $J$.

The following processes will play an important role in a generic decomposition formula that will be introduced in this section. Let

\begin{equation}
R_{t}=\frac{1}{8}\mathbb{E}_{t}\left[\int^{T}_{t}d[M,M]_{u}\right]\label{def:R_t}
\end{equation}
and 

\begin{equation}
U_{t}=\frac{\rho}{2}\mathbb{E}_{t}\left[\int^{T}_{t}\sigma_{u}d[W,M]_{u}\right],\label{def:U_t}
\end{equation}
where $[\cdot, \cdot]$ denotes the quadratic covariation process. 

Now we prove a generic version of Theorem 2.2 in \cite{Alos12} which will be useful for our problem.\par

\begin{theorem}[Generic decomposition formula]
\label{Generic Deco}
Let $B_{t}$ be a continuous semimartingale with respect to the filtration $\mathcal{F}_t$, let $A(t,x,y)$ be a $C^{1,2,2} ([0,T]\times [0,\infty)\times [0,\infty))$ function and let $v^2_t, M_t$ be defined as above. Then we are able to formulate the expectation of $e^{-rT}A(T,\tilde{X}_{T},v^{2}_{T})B_{T}$
in the following way:
\begin{eqnarray*}
&&{\mathbb{E}}\left[e^{-rT}A(T,\tilde{X}_{T},v^{2}_{T})B_{T}\right]=A(0,\tilde{X}_{0},v^{2}_{0})B_{0}\\
&+&\mathbb{E}\left[\int^{T}_{0}e^{-ru}\partial_{y}A(u,\tilde{X}_{u},v^{2}_{u})B_{u}\frac{1}{T-u}\left(v^{2}_{u}-\sigma^{2}_{u}\right)du\right]\\
&+&\mathbb{E}\left[\int^{T}_{0}e^{-ru}A(u,\tilde{X}_{u},v^{2}_{u})dB_{u}\right]\\ 
&+&\frac{1}{2}\mathbb{E}\left[\int^{T}_{0}e^{-ru}\left(\partial^{2}_{x}-\partial_{x}\right)A(u,\tilde{X}_{u},v^{2}_{u})B_{u}\left(\sigma^{2}_{u}- v^{2}_{u}\right)du\right]\\ 
&+&\frac{1}{2}\mathbb{E}\left[\int^{T}_{0}e^{-ru}\partial^{2}_{y}A(u,\tilde{X}_{u},v^{2}_{u})B_{u}\frac{1}{(T-u)^{2}}d[M,M]_{u}\right]\\ 
&+&\rho \mathbb{E}\left[\int^{T}_{0}e^{-ru}\partial^{2}_{x,y}A(u,\tilde{X}_{u},v^{2}_{u})B_{u}\frac{\sigma_{u}}{T-u}d[W,M]_{u}\right]\\ 
&+&\sqrt{1-\rho^{2}} \mathbb{E}\left[\int^{T}_{0}e^{-ru}\partial^{2}_{x,y}A(u,\tilde{X}_{u},v^{2}_{u})B_{u}\frac{\sigma_{u}}{T-u}d[\tilde{W},M]_{u}\right]\\ 
&+&\rho \mathbb{E}\left[\int^{T}_{0}e^{-ru}\partial_{x}A(u,\tilde{X}_{u},v^{2}_{u})\sigma_{u} d[W,B]_{u}\right]\\ 
&+&\sqrt{1-\rho^{2}} \mathbb{E}\left[\int^{T}_{0}e^{-ru}\partial_{x}A(u,\tilde{X}_{u},v^{2}_{u})\sigma_{u} d[\tilde{W},B]_{u}\right]\\ 
&+&\mathbb{E}\left[\int^{T}_{0}e^{-ru}\partial_{y}A(u,\tilde{X}_{u},v^{2}_{u})\frac{1}{T-u}d[M,B]_{u}\right].
\end{eqnarray*}
\end{theorem}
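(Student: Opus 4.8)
The plan is to apply It\^o's formula to the discounted product $e^{-ru}A(u,\tilde X_u,v^2_u)B_u$ on $[0,T]$, take expectations so that the local--martingale contributions vanish, and then regroup the surviving bounded--variation and covariation terms so as to match the ten terms on the right--hand side. The three driving objects enter through their dynamics: $\tilde X$ via \eqref{log-model-continious}; $v^2$ via the expression $dv^2_u=\frac{1}{T-u}\big(dM_u+(v^2_u-\sigma^2_u)\,du\big)$ recalled just after \eqref{def:M_t}; and $B$ through its canonical decomposition into a continuous finite--variation part and a continuous local martingale. Since $\tilde X$ is continuous and $M$ is a martingale for the filtration generated by $W$ and $J$, the only quadratic covariations produced are $d[\tilde X,\tilde X]_u=\sigma^2_u\,du$, $d[v^2,v^2]_u=\frac{1}{(T-u)^2}\,d[M,M]_u$ and $d[\tilde X,v^2]_u=\frac{\sigma_u}{T-u}\big(\rho\,d[W,M]_u+\sqrt{1-\rho^2}\,d[\tilde W,M]_u\big)$ (the $[\tilde W,M]$ piece in fact vanishes by independence, but is kept for symmetry), together with the covariations with $B$ of the martingale part of $A$, namely $\partial_xA\,\sigma_u(\rho\,dW_u+\sqrt{1-\rho^2}\,d\tilde W_u)+\partial_yA\,\frac{1}{T-u}\,dM_u$.

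Expanding the It\^o expansion of the product and separating the non--martingale contributions, one gets: the block $\partial_tA+\frac12\sigma^2_u\partial^2_xA+(r-\lambda k-\frac12\sigma^2_u)\partial_xA-rA$ at $(u,\tilde X_u,v^2_u)$, i.e.\ the operator $\mathcal L_\sigma$ of \eqref{FK} applied to $A$; the drift term $\partial_yA\,\frac{1}{T-u}(v^2_u-\sigma^2_u)$ coming from $dv^2$; the It\^o corrections $\frac12\partial^2_yA\,\frac{1}{(T-u)^2}\,d[M,M]_u$ and $\partial^2_{x,y}A\,\frac{\sigma_u}{T-u}\big(\rho\,d[W,M]_u+\sqrt{1-\rho^2}\,d[\tilde W,M]_u\big)$; the term $A\,dB_u$; and the covariation block $\partial_xA\,\sigma_u\big(\rho\,d[W,B]_u+\sqrt{1-\rho^2}\,d[\tilde W,B]_u\big)+\partial_yA\,\frac{1}{T-u}\,d[M,B]_u$ obtained from $d[A,B]_u$; all multiplied by $e^{-ru}$, and by $B_u$ where appropriate. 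The last five of these already coincide with Terms 5--10, the drift term with Term 2, and $A\,dB_u$ with Term 3. The first block is rewritten with the Black--Scholes PDE (the Feynman--Kac identity \eqref{FK}): since the operator carrying volatility $v_u$ annihilates $A(u,\tilde X_u,v^2_u)$, the block equals the difference of the operators with volatilities $\sigma_u$ and $v_u$, that is $\frac12(\sigma^2_u-v^2_u)(\partial^2_x-\partial_x)A(u,\tilde X_u,v^2_u)$, which is Term 4; the deterministic summand $A(0,\tilde X_0,v^2_0)B_0$ comes from the lower endpoint of integration.

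What remains is to integrate from $0$ to $T$ and take expectations, so that the stochastic integrals against $W$, $\tilde W$, $M$ and against the local--martingale part of $B$ disappear. Showing that these are genuine martingales, rather than merely local ones, is the main obstacle. The delicate region is $u\uparrow T$, where the weights $\frac{1}{T-u}$ and, worse, $\frac{1}{(T-u)^2}$ blow up, so one has to control the Greeks $\partial_yA,\partial^2_yA,\partial^2_{x,y}A$ against $d[M,M]_u$ and $d[W,M]_u$ uniformly near maturity; this is precisely the type of bound developed in \ref{sec:appendix} and, in the concrete case $A=BS$ treated later, it is supplied by explicit estimates on $\Gamma BS$, $\Lambda\Gamma BS$ and $\Gamma^2BS$. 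In the present generic statement these requirements should be read as the hypotheses ensuring that all integrals on the right are well defined and that the corresponding local martingales are true martingales; under them, the identity follows.
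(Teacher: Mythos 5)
Your proposal is correct and follows essentially the same route as the paper: Itô's formula applied to the discounted product $e^{-ru}A(u,\tilde X_u,v^2_u)B_u$, the dynamics $dv^2_u=\frac{1}{T-u}(dM_u+(v^2_u-\sigma^2_u)du)$, the Feynman--Kac identity with volatility $v_u$ to collapse the drift block into $\tfrac12(\sigma^2_u-v^2_u)(\partial^2_x-\partial_x)A$, and expectations to kill the stochastic integrals. Your closing remark on verifying that the local martingales are true martingales near $u=T$ is a point the paper's proof passes over silently, and you correctly locate where the needed estimates are supplied.
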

\begin{proof}
Applying the Itô formula to the process $e^{-rt}A(t,{\tilde X}_{t},v^{2}_{t})B_{t}$ we obtain:
\begin{eqnarray*}
&&e^{-rT}A(T,\tilde{X}_{T},v^{2}_{T})B_{T}=A(0,\tilde{X}_{0},v^{2}_{0})B_{0}\\
&-&r\int^{T}_{0}e^{-ru}A(u,\tilde{X}_{u},v^{2}_{u})B_{u}du \\ 
&+&\int^{T}_{0}e^{-ru}\partial_{t}A(u,\tilde{X}_{u},v^{2}_{u})B_{u}du\\ 
&+&\int^{T}_{0}e^{-ru}\partial_{x}A(u,\tilde{X}_{u},v^{2}_{u})B_{u}d\tilde{X}_{u}\\
&+&\int^{T}_{0}e^{-ru}\partial_{y}A(u,\tilde{X}_{u},v^{2}_{u})B_{u}dv^{2}_{u}\\
&+&\int^{T}_{0}e^{-ru}A(u,\tilde{X}_{u},v^{2}_{u})dB_{u}\\ 
&+&\frac{1}{2}\int^{T}_{0}e^{-ru}\partial^{2}_{x}A(u,\tilde{X}_{u},v^{2}_{u})B_{u}d[\tilde{X},\tilde{X}]_{u}\\ 
&+&\frac{1}{2}\int^{T}_{0}e^{-ru}\partial^{2}_{y}A(u,\tilde{X}_{u},v^{2}_{u})B_{u}d[v^{2},v^{2}]_{u}\\ 
&+&\int^{T}_{0}e^{-ru}\partial^{2}_{x,y}A(u,\tilde{X}_{u},v^{2}_{u})B_{u}d[\tilde{X},v^{2}]_{u}\\ 
&+&\int^{T}_{0}e^{-ru}\partial_{x}A(u,\tilde{X}_{u},v^{2}_{u})d[\tilde{X},B]_{u}\\ 
&+&\int^{T}_{0}e^{-ru}\partial_{y}A(u,\tilde{X}_{u},v^{2}_{u})d[v^{2},B]_{u}.
\end{eqnarray*}
In the next step we apply the Feynman-Kac operator with volatility $v_{t}$, alongside the definition of $M_t$. After algebraic operations, we retrieve
\begin{eqnarray*}
&&e^{-rT}A(T,\tilde{X}_{t},v^{2}_{T})B_{T}=A(0,\tilde{X}_{0},v^{2}_{0})B_{0}\\
&+&\frac{1}{2}\int^{T}_{0}e^{-ru}\partial_{x}A(u,\tilde{X}_{u},v^{2}_{u})B_{u}(v^{2}_{u}-\sigma^{2}_{u})du\\
&+&\int^{T}_{0}e^{-ru}\partial_{x}A(u,\tilde{X}_{u},v^{2}_{u})B_{u} \sigma_{u}(\rho dW_{u}+\sqrt{1-\rho^2}d\tilde{W}_{u})\\
&+&\int^{T}_{0}e^{-ru}\partial_{y}A(u,\tilde{X}_{u},v^{2}_{u})B_{u}\frac{1}{T-u}dM_{u}\\
&+&\int^{T}_{0}e^{-ru}\partial_{y}A(u,\tilde{X}_{u},v^{2}_{u})B_{u}\frac{1}{T-u}\left(v^{2}_{u}-\sigma^{2}_{u}\right)du\\
&+&\int^{T}_{0}e^{-ru}A(u,\tilde{X}_{u},v^{2}_{u})dB_{u}\\ 
&+&\frac{1}{2}\int^{T}_{0}e^{-ru}\partial^{2}_{x}A(u,\tilde{X}_{u},v^{2}_{u})B_{u}\left(\sigma^{2}_{u}- v^{2}_{u}\right)du\\ 
&+&\frac{1}{2}\int^{T}_{0}e^{-ru}\partial^{2}_{y}A(u,\tilde{X}_{u},v^{2}_{u})B_{u}\frac{1}{(T-u)^{2}}d[M,M]_{u}\\ 
&+&\rho\int^{T}_{0}e^{-ru}\partial^{2}_{x,y}A(u,\tilde{X}_{u},v^{2}_{u})B_{u}\frac{\sigma_{u}}{T-u}d[W,M]_{u}\\ 
&+&\sqrt{1-\rho^{2}}\int^{T}_{0}e^{-ru}\partial^{2}_{x,y}A(u,\tilde{X}_{u},v^{2}_{u})B_{u}\frac{\sigma_{u}}{T-u}d[\tilde{W},M]_{u}\\ 
&+&\rho\int^{T}_{0}e^{-ru}\partial_{x}A(u,\tilde{X}_{u},v^{2}_{u})\sigma_{u} d[W,B]_{u}\\
&+&\sqrt{1-\rho^{2}}\int^{T}_{0}e^{-ru}\partial_{x}A(u,\tilde{X}_{u},v^{2}_{u})\sigma_{u} d[\tilde{W},B]_{u}\\ 
&+&\int^{T}_{0}e^{-ru}\partial_{y}A(u,\tilde{X}_{u},v^{2}_{u})\frac{1}{T-u}d[M,B]_{u}.
\end{eqnarray*}
After applying expectations on both sides of the equation, we end up with the statement of the theorem.
\end{proof}

\section{A decomposition formula for SVJ models.}\label{sec:model}

In the previous section, we have given a general decomposition formula that can be used for stochastic volatility models with continuous sample paths. In this section, we are going to extend the previous decomposition to the case of a general jump diffusion model with finite activity jumps.

The main idea, like the one used in \cite{MerinoVives17}, is to adapt the pricing process in a way to be able to apply the decomposition technique effectively. In our case, this would translate into conditioning on the finite number of jumps $n_{T}$. If we denote $J_{n}=\sum_{i=0}^{n}Y_{i}$, using the integrability of Black-Scholes function, we can obtain the following conditioning formula for European options with payoff at maturity $T : BS(T,X_{T},v_{T})$. 
\begin{eqnarray*}
V_{0}&=&e^{-rT}\mathbb{E}\left[BS(T,X_{T},v_{T})\right]\\
&=&e^{-rT}\sum^{+\infty}_{n=0} p_{n}(\lambda T) \mathbb{E}\left[BS\left(T, \tilde{X}_{T} + \sum^{n_{T}}_{i=0} Y_{i}, v_{T}\right)\Big|n_{T}=n\right]\\
&=&e^{-rT}\sum^{+\infty}_{n=0} p_{n}(\lambda T) \mathbb{E}\left[BS\left(T, \tilde{X}_{T} + J_{n}, v_{T}\right)\right]\\
&=&e^{-rT}\sum^{\infty}_{n=0} p_{n}(\lambda T) \mathbb{E}\left[\mathbb{E}_{J_{n}}\left[BS(T, \tilde{X}_{T} + J_{n}, v_{T})\right]\right]\\
&=&e^{-rT}\sum^{\infty}_{n=0} p_{n}(\lambda T) \mathbb{E}\left[G_{n}(T, \tilde{X}_{T},  v_{T})\right].
\end{eqnarray*}
where 
$$G_{n}(T, \tilde{X}_{T}, v_{T}):=\mathbb{E}_{J_{n}}\left[BS(T, \tilde{X}_{T} + J_{n}, v_{T})\right].$$

We have switched our problem from a jump diffusion model with stochastic volatility to another one with no jumps. %
Combining the generic SV decomposition formula (from Theorem \ref{Generic Deco}) and conditioning on the number of jumps we obtain a corner-stone for our approximation.
\begin{corollary}[SVJ decomposition formula]\label{BS Deco}
Let $X_t$ be a log-price process \eqref{log-model}, $G_{n}$ be the previously defined function. Then we can express the call option fair value $V_0$ using the Poisson mass function $p_n$ and a martingale process $M_t$ (defined by \eqref{def:M_t}). In particular,  
\begin{eqnarray*}
V_{0}&=&\sum^{\infty}_{n=0} p_{n}(\lambda T) G_{n}(0,\tilde{X}_{0},v_{0})\\
&+&\frac{1}{8}\sum^{\infty}_{n=0} p_{n}(\lambda T) \mathbb{E}\left[\int^{T}_{0}e^{-ru}\Gamma^{2} G_{n}(u,\tilde{X}_{u},v_{u})d[M,M]_{u}\right] \\
&+&\frac{\rho}{2}\sum^{\infty}_{n=0} p_{n}(\lambda T)\mathbb{E}\left[\int^{T}_{0}e^{-ru}\Lambda\Gamma G_{n}(u,\tilde{X}_{u},v_{u})\sigma_{u}d[W,M]_{u}\right].
\end{eqnarray*}
\end{corollary}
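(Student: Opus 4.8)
The plan is to apply the generic decomposition of Theorem~\ref{Generic Deco} term-by-term to each summand in the conditioning expansion $V_0=e^{-rT}\sum_{n=0}^\infty p_n(\lambda T)\,\mathbb{E}[G_n(T,\tilde X_T,v_T)]$, then collect terms and check that most of them vanish. For each fixed $n$, I would take $A(t,x,y)=G_n(t,x,y)$ (with the convention that $G_n$ is a function of $y=v^2$, or after the obvious change of variables $y\leftrightarrow y^2$ between $v$ and $v^2$) and $B_t\equiv 1$, a trivial continuous semimartingale with $dB_t=0$. The first step is therefore to justify that $G_n$ has the required $C^{1,2,2}$ regularity on $[0,T]\times[0,\infty)\times[0,\infty)$: this follows because $BS(t,x,y)$ is smooth in these arguments away from $t=T$, and $G_n$ is obtained from $BS$ by integrating against the (log-normal-type) law of $J_n$, so differentiation under the integral sign is legitimate by dominated convergence using the explicit Gaussian-type bounds on the $BS$ derivatives recorded in the preliminaries.

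The second step is the term-by-term simplification. With $B_t\equiv1$ every term in Theorem~\ref{Generic Deco} containing $dB_u$, $d[W,B]_u$, $d[\tilde W,B]_u$ or $d[M,B]_u$ drops out immediately. What remains are exactly four terms: the $A(0,\tilde X_0,v_0^2)$ boundary term; the $\partial_y A\cdot\frac{1}{T-u}(v_u^2-\sigma_u^2)$ term; the $\frac12(\partial_x^2-\partial_x)A\cdot(\sigma_u^2-v_u^2)$ term; the $\frac12\partial_y^2A\cdot\frac{1}{(T-u)^2}d[M,M]_u$ term; the $\rho\,\partial_{x,y}^2A\cdot\frac{\sigma_u}{T-u}d[W,M]_u$ term; and the $\sqrt{1-\rho^2}\,\partial_{x,y}^2A\cdot\frac{\sigma_u}{T-u}d[\tilde W,M]_u$ term. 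Now the key cancellation: $G_n$ inherits from $BS$ the Black--Scholes-type PDE relation in the variables $(t,x,y^2)$, namely that $\partial_t G_n+\tfrac12 y^2(\partial_x^2-\partial_x)G_n+(r-\lambda k)\partial_x G_n-rG_n$ combined with the relation $\partial_{y^2}(\cdot)=\tfrac{\tau}{2}(\partial_x^2-\partial_x)(\cdot)$ for $BS$-type functions forces the drift-in-$du$ terms to recombine; more precisely, using $\partial_y A=\tfrac{T-u}{2}\Gamma A$ (the heat-equation identity for $BS$, where here $y$ stands for the variance and $\tau=T-u$), the two $du$-integrals with integrand proportional to $(v_u^2-\sigma_u^2)$ cancel exactly. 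This is the standard ``Alòs trick'' and is the heart of why the Feynman--Kac operator was applied with volatility $v_t$ in the first place.

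The third step is to rewrite the three surviving non-trivial terms using the operator identities from the preliminaries. The $\frac12\partial_{y}^2 A\cdot\frac{1}{(T-u)^2}d[M,M]_u$ term becomes, via $\partial_{y}=\tfrac{T-u}{2}\Gamma$ applied twice (so $\partial_y^2=\tfrac{(T-u)^2}{4}\Gamma^2$ modulo lower-order pieces that again combine with the $\partial_y$ term — care is needed here), precisely $\tfrac18\Gamma^2 G_n\,d[M,M]_u$; the $\rho\,\partial_{x,y}^2 A\cdot\frac{\sigma_u}{T-u}d[W,M]_u$ term becomes $\tfrac{\rho}{2}\Lambda\Gamma G_n\,\sigma_u\,d[W,M]_u$ via $\partial_{x,y}=\tfrac{T-u}{2}\Lambda\Gamma$; and the $\sqrt{1-\rho^2}$ term with $d[\tilde W,M]_u$ vanishes because $M_t=\int_0^T\mathbb{E}_t[\sigma_s^2]\,ds$ is adapted to the filtration generated by $W$ and $J$, which is independent of $\tilde W$, hence $[\tilde W,M]\equiv0$. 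Summing over $n$ against $e^{-rT}p_n(\lambda T)$ and absorbing the discount factor into the time-$u$ integrands (noting $e^{-rT}=e^{-ru}e^{-r(T-u)}$ is handled by the discounted Itô expansion already built into Theorem~\ref{Generic Deco}) yields the claimed formula.

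\textbf{Main obstacle.} I expect the delicate point to be the bookkeeping in the third step: the change of variables between $v$ and $v^2=y$ means that each $y$-derivative of $A$ picks up factors of $\tfrac12 v^{-1}$ and the $BS$ heat-equation identity $\partial_y BS=\tfrac{T-u}{2}\Gamma BS$ is stated in the \emph{variance} variable, so one must be scrupulous about whether $\partial_y^2$ produces only $\tfrac{(T-u)^2}{4}\Gamma^2$ or also a residual $\Gamma$-term, and then verify that any such residual is exactly what is needed to kill the remaining $\partial_y A\cdot\frac{1}{T-u}(v_u^2-\sigma_u^2)$ contribution together with the $\Gamma A$-term — i.e. the two ``cancellation'' steps are really one coupled computation. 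A secondary technical point is justifying the interchange of the infinite sum over $n$ with the expectation and the stochastic/Lebesgue integrals, which requires a uniform-in-$n$ integrability bound; this is routine given that $p_n(\lambda T)$ decays super-exponentially and the $BS$-derivative bounds are uniform on compacts, but it should be stated.
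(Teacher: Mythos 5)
Your proposal follows essentially the same route as the paper: apply Theorem~\ref{Generic Deco} with $A(t,x,y)=G_n$ (in the variance variable) and $B_t\equiv 1$, use $\partial_{\sigma^2}BS=\frac{T-t}{2}\Gamma BS$ and $\partial^2_{\sigma^2}BS=\frac{(T-t)^2}{4}\Gamma^2 BS$ to cancel the two $(v_u^2-\sigma_u^2)$ terms and to rewrite the surviving integrals, and kill the $d[\tilde W,M]$ term since $M$ is adapted to the filtration generated by $W$ and $J$. Two minor remarks: the ``residual lower-order pieces'' you worry about in $\partial_y^2$ do not arise, because the identity is exact in the variance variable ($\Gamma$ involves only $x$-derivatives and commutes with $\partial_{\sigma^2}$); and where you invoke differentiation under the integral sign for regularity near $t=T$, the paper instead handles the non-smooth terminal condition by a mollifier argument as in Merino and Vives (2015).
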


\begin{proof}
We apply Theorem \ref{Generic Deco} to $A(t,\tilde{X}_{t},v^{2}_{t}):=G_{n}(t,\tilde{X}_{t},v_{t})$ and $B_{t}\equiv 1$. Note that
\begin{eqnarray*}
\partial_{\sigma^{2}}BS(t,x,\sigma)=\frac{(T-t)}{2}\left(\partial^{2}_{x}-\partial_{x}\right)BS(t,x,\sigma)
\end{eqnarray*}
and
\begin{eqnarray*}
\partial^{2}_{\sigma^{2}}BS(t,x,\sigma)=\frac{(T-t)^{2}}{4}\left(\partial^{2}_{x}-\partial_{x}\right)^{2}BS(t,x,\sigma).
\end{eqnarray*}
Then, the corollary follows immediately. Note that in order to apply the It\^o formula to function $G_{n}$ we need to use a mollifier argument as it is done in \cite{MerinoVives15}.
\end{proof} 

\begin{remark}
For clarity, in the following we will refer to terms of the previous decomposition as

$$V_{0}=\sum^{\infty}_{n=0} p_{n}(\lambda T) G_{n}(0,\tilde{X}_{0},v_{0}) +\sum^{\infty}_{n=0} p_{n}(\lambda T) \left[ (I_{n}) + (II_{n}) \right].$$
\end{remark}

To compute the above expression can be cumbersome. The main idea is to find an alternative formula such that the main terms are easier to be computed while paying the price by having more terms in the formula. Fortunately, in many cases these new terms can be neglected as approximation error. The size of the error depends on the model and whether we are focusing on short or long time dynamics.\par

The following lemma is proved in Al\`os~(2012), p.~406; and will help us to derive bounds on the error terms that appear in the main result of this paper - a computationally suitable decomposition formula for generic finite activity SVJ models.
\begin{lemma}\label{lemaclau}
\label{fitage} Let $0\leq t\leq s\leq T$ and $\mathcal{G}_{t}:=\mathcal{F}_{t}\vee
\mathcal{F}_{T}^{W}.$ For every $n\geq 0,$ there exists $C=C(n) $ such that
\begin{equation*}
\left\vert \mathbb{E} \left( \left. {\Lambda^{n}\Gamma BS}\left( s,\tilde{X}_{s},v_{s}\right)
\right\vert \mathcal{G}_{t}\right) \right\vert \leq C \left( \int_{s}^{T}E_s
\left(\sigma_{\theta }^{2}\right) d\theta \right)^{-\frac{1}{2}\left(
n+1\right) }.
\end{equation*}
\end{lemma}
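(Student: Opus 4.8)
The plan is to reduce everything to an explicit estimate on the Black--Scholes Gamma and its $x$-derivatives, exploiting that conditioning on $\mathcal{G}_t$ freezes the Brownian driver $W$ on all of $[0,T]$ but leaves $\tilde W$ (and the jump part) free. First I would recall the closed-form expressions for $\Gamma BS$, $\Lambda\Gamma BS$ and more generally $\Lambda^n\Gamma BS(s,x,y)$ collected in the Preliminaries: each is $\frac{e^{x}}{y\sqrt{2\pi\tau}}\exp(-d_+^2(y)/2)$ multiplied by a rational function of $d_+(y)$ and $y\sqrt{\tau}$. Writing $\tau=T-s$ and $y=v_s$, the key observation is that $v_s^2\,\tau=\int_s^T E_s(\sigma_\theta^2)\,d\theta$, so the quantity on the right-hand side of the claimed bound is exactly $(v_s\sqrt{\tau})^{-(n+1)}$. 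Thus the lemma is the assertion that $\bigl|\mathbb{E}(\Lambda^n\Gamma BS(s,\tilde X_s,v_s)\mid\mathcal{G}_t)\bigr|\le C\,(v_s\sqrt{\tau})^{-(n+1)}$.

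Next I would handle the conditional expectation. Conditioned on $\mathcal{G}_t=\mathcal{F}_t\vee\mathcal{F}_T^W$, the volatility path $\sigma$ is determined (it is adapted to $W$ and $J$ — careful: it is actually adapted to $\mathcal F^W\vee\mathcal F^N$, so one also conditions on the needed jump information, or one first conditions on $\mathcal F^W\vee\mathcal F^N$ and then takes a further expectation), hence $v_s$ is $\mathcal{G}_t$-measurable, and the only randomness left in $\tilde X_s$ under this conditioning is the independent piece $\sqrt{1-\rho^2}\int \sigma_u\,d\tilde W_u$ (plus the $W$-driven and drift parts, which are frozen). So $\tilde X_s = a + g$ where $a$ is $\mathcal{G}_t$-measurable and $g$ is, conditionally, a centered Gaussian with a known variance. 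The plan is then to write $\mathbb{E}(\Lambda^n\Gamma BS(s,a+g,v_s)\mid\mathcal{G}_t)$ as a Gaussian convolution: the function $x\mapsto\Lambda^n\Gamma BS(s,x,v_s)$ is (up to the discount/strike constants) a Gaussian density in $x$ of ``width'' $v_s\sqrt{\tau}$ times a polynomial in the normalized variable, and convolving it with another Gaussian only increases the width. A clean way to see the bound: $\Gamma BS(s,x,v_s)$ is, as a function of $x$, proportional to a normal density with standard deviation $v_s\sqrt\tau$, so $\sup_x\Gamma BS(s,x,v_s)\le C\,(v_s\sqrt\tau)^{-1}$, and each application of $\Lambda=\partial_x$ to a Gaussian density brings down one extra factor of order $(v_s\sqrt\tau)^{-1}$ (Hermite polynomials times the density, which are uniformly $L^1$- and $L^\infty$-controlled after rescaling); the Gaussian convolution coming from the conditional law of $g$ preserves these $L^\infty$ bounds by Jensen, or even improves them. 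Collecting, $\bigl|\mathbb{E}(\Lambda^n\Gamma BS(s,\tilde X_s,v_s)\mid\mathcal{G}_t)\bigr|\le C(n)(v_s\sqrt\tau)^{-(n+1)}$, which is the claim once we re-express $v_s^2\tau=\int_s^T E_s(\sigma_\theta^2)d\theta$.

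I would then just remark that this is precisely Lemma 4.2 / the estimate on p.~406 of \cite{Alos12}; in the present more general SVJ setting the extra jump randomness in $\sigma$ is absorbed by conditioning on it as well (or is simply part of $\mathcal F^N$, which does not interact with the Gaussian convolution argument), so the proof is unchanged. The main obstacle, and the only place care is genuinely needed, is making the ``$\Lambda$ costs one factor of $(v_s\sqrt\tau)^{-1}$'' step rigorous and uniform in $n$: one must verify that differentiating the Black--Scholes Gaussian kernel $n$ times and then integrating against the conditional Gaussian law of $g$ produces a bound with a constant depending only on $n$ and not on $s$, $t$, the path of $\sigma$, or the strike. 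This is handled by passing to the normalized variable $d_+(v_s)$, noting that all the resulting integrands are of the form (polynomial in $d_+$)$\times$(standard Gaussian density), whose $L^\infty$ norms are absolute constants, and that the extra Gaussian smoothing only contracts these norms; the powers of $v_s\sqrt\tau$ are then exactly bookkept by the chain rule. I do not expect to need any new estimate beyond what is already used in \cite{Alos12}.
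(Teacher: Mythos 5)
Your argument is correct and is essentially a reconstruction of the proof the paper merely cites (Al\`os 2012, p.~406): the paper gives no proof of its own beyond that reference, and your route --- rewriting $\Lambda^{n}\Gamma BS(s,\cdot,v_s)$ as derivatives of a Gaussian kernel of width $v_s\sqrt{T-s}$, using the conditional Gaussianity of $\tilde X_s$ given $\mathcal{G}_t$, and identifying $v_s^2(T-s)=\int_s^T E_s(\sigma_\theta^2)\,d\theta$ --- is exactly the cited argument. Your parenthetical about additionally conditioning on the jump filtration (since here $\sigma$ is adapted to $W$ and $J$ rather than to $W$ alone) is the one genuine adaptation needed in the SVJ setting, and you handle it correctly.
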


\begin{theorem}[Computationally suitable SVJ decomposition]
\label{Generic Approximation formula}
Let $X_t$ be a log-price process \eqref{log-model} and $G_{n}$ be the previously defined function. Then we can express the call option fair value $V_0$ using the Poisson probability mass function $p_n$ and processes $R_t, U_t$ defined by \eqref{def:R_t} and \eqref{def:U_t}, respectively. In particular,  
\begin{eqnarray*}
V_{0}&=&\sum^{\infty}_{n=0} p_{n}(\lambda T) G_{n}(0,\tilde{X}_{0},v_{0})\\
&+&\sum^{\infty}_{n=0} p_{n}(\lambda T) \Gamma^{2} G_{n}(0,\tilde{X}_{0},v_{0})R_{0} \\
&+&\sum^{\infty}_{n=0} p_{n}(\lambda T)\Lambda\Gamma G_{n}(0,\tilde{X}_{0},v_{0})U_{0}\\
&+&\sum^{\infty}_{n=0} p_{n}(\lambda T) \Omega_{n}
\end{eqnarray*}
where $\Omega_{n}$ are error terms fully derived in Appendix \ref{error_terms}. 
\end{theorem}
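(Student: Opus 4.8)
The plan is to iterate the generic decomposition of Theorem~\ref{Generic Deco}. Starting from Corollary~\ref{BS Deco}, written as $V_{0}=\sum_{n\ge0}p_{n}(\lambda T)G_{n}(0,\tilde{X}_{0},v_{0})+\sum_{n\ge0}p_{n}(\lambda T)[(I_{n})+(II_{n})]$, the aim is to replace the integrand process $\Gamma^{2}G_{n}(u,\tilde{X}_{u},v_{u})$ inside $(I_{n})$ and $\Lambda\Gamma G_{n}(u,\tilde{X}_{u},v_{u})$ inside $(II_{n})$ by their values at $u=0$, at the price of explicit remainder integrals. This freezing is carried out by applying Theorem~\ref{Generic Deco} a second time: once to the pair $A=\Gamma^{2}G_{n}$, $B=R$, and once to $A=\Lambda\Gamma G_{n}$, $B=U$ (with $R$, $U$ continuous semimartingales in the setting of interest, and $\Gamma^{2}G_{n}$, $\Lambda\Gamma G_{n}$ of class $C^{1,2,2}$ after the usual mollification of the Black--Scholes kernel near maturity, cf.~\cite{MerinoVives15}).

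For the first pair, three observations collapse the ten terms produced by Theorem~\ref{Generic Deco} into the claimed shape. First, the pure-$x$ operators $\Lambda=\partial_{x}$ and $\Gamma=\partial_{x}^{2}-\partial_{x}$ have $x$-independent coefficients and hence commute with the Feynman--Kac operator $\mathcal{L}_{v}$ of~\eqref{FK}; since $\mathcal{L}_{v}G_{n}=0$ (which is how Corollary~\ref{BS Deco} was obtained: $\mathcal{L}_{v}BS=0$ together with the translation invariance of $\mathcal{L}_{v}$ in the log-price gives $\mathcal{L}_{v}G_{n}(t,x,y)=\mathbb{E}_{J_{n}}[(\mathcal{L}_{v}BS)(t,x+J_{n},y)]=0$), we also get $\mathcal{L}_{v}(\Gamma^{2}G_{n})=0$, so the Feynman--Kac cancellation inside Theorem~\ref{Generic Deco} still applies. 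Second, $R_{T}=0$, so the left-hand side $\mathbb{E}[e^{-rT}\Gamma^{2}G_{n}(T,\tilde{X}_{T},v_{T})R_{T}]$ vanishes, and $\tfrac{1}{8}[M,M]_{t}+R_{t}=\tfrac{1}{8}\mathbb{E}_{t}[[M,M]_{T}]$ is a martingale, so $dR_{t}$ is a martingale differential minus $\tfrac{1}{8}\,d[M,M]_{t}$; consequently the $\int_{0}^{T}e^{-ru}A\,dB_{u}$ term of Theorem~\ref{Generic Deco} contributes exactly $-(I_{n})$ after taking expectations. Third, $\tilde{W}$ is independent of $\mathcal{F}^{W}\vee\mathcal{F}^{N}$, to which $M$ and hence $R$ are adapted, so $[\tilde{W},M]=[\tilde{W},R]=0$ and the two $\sqrt{1-\rho^{2}}$ terms disappear. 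Solving the surviving identity for $(I_{n})$ yields $(I_{n})=\Gamma^{2}G_{n}(0,\tilde{X}_{0},v_{0})R_{0}+\Omega_{n}^{I}$, where $\Omega_{n}^{I}$ gathers the remaining terms of Theorem~\ref{Generic Deco}: those with a $\partial_{y}$, a $\partial_{y}^{2}$, a further $\Gamma$, or a mixed $\partial_{x,y}^{2}$ acting on $\Gamma^{2}G_{n}$, weighted respectively by $(v^{2}-\sigma^{2})$, by $d[M,M]$, by $d[W,M]$, by $d[W,R]$ or by $d[M,R]$. Running the same argument with $(A,B)=(\Lambda\Gamma G_{n},U)$ — now using $U_{T}=0$ and the fact that $\tfrac{\rho}{2}\int_{0}^{t}\sigma_{u}\,d[W,M]_{u}+U_{t}$ is a martingale, so the $dB$-term reproduces $-(II_{n})$ — gives $(II_{n})=\Lambda\Gamma G_{n}(0,\tilde{X}_{0},v_{0})U_{0}+\Omega_{n}^{II}$.

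Substituting these two identities back into Corollary~\ref{BS Deco} and setting $\Omega_{n}:=\Omega_{n}^{I}+\Omega_{n}^{II}$ produces exactly the stated decomposition, the explicit list of terms composing $\Omega_{n}$ being the content of Appendix~\ref{error_terms}. The algebra above is routine; the points that require genuine care — and where Lemma~\ref{lemaclau} is used — are the admissibility issues deferred to the Appendix: one must verify that every integral defining $\Omega_{n}$ converges, the Greeks $\Gamma^{k}G_{n}$ and $\Lambda\Gamma^{k}G_{n}$ blowing up as $u\uparrow T$ while Lemma~\ref{lemaclau} supplies precisely the $(\int_{u}^{T}\mathbb{E}_{u}[\sigma_{\theta}^{2}]\,d\theta)^{-(k+1)/2}$ control that offsets the vanishing of $[M,M]$, $[W,M]$ near $T$; that the martingale pieces discarded above are true rather than merely local martingales; and that the sum over $n$ may be carried out term by term, which follows by dominated convergence from the super-exponential decay of $p_{n}(\lambda T)$ once the per-$n$ estimates are available. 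I expect obtaining these error bounds uniformly enough in $n$ to legitimize interchanging $\sum_{n}$ with $\mathbb{E}$ to be the main obstacle; the decomposition itself is a direct second application of Theorem~\ref{Generic Deco}.
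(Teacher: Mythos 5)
Your proposal is correct and follows essentially the same route as the paper: the paper's proof consists precisely of applying Theorem~\ref{Generic Deco} a second time to the pairs $(A,B)=(\Gamma^{2}G_{n},R)$ and $(A,B)=(\Lambda\Gamma G_{n},U)$ and collecting the surviving terms into $\Omega_{n}$ as listed in Appendix~\ref{error_terms}. You in fact supply more detail than the paper does (the commutation of $\Lambda,\Gamma$ with the Feynman--Kac operator, $R_{T}=U_{T}=0$, the martingale-plus-compensator structure of $dR$ and $dU$ reproducing $-(I_{n})$ and $-(II_{n})$, and the vanishing of the $[\tilde{W},\cdot]$ brackets), and your identification of the integrability near maturity and the interchange of $\sum_{n}$ with $\mathbb{E}$ as the points needing Lemma~\ref{lemaclau} matches where the paper defers the work to the appendices.
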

\begin{proof}
We use Theorem \ref{Generic Deco} iteratively for the following choices of $A(t,X_{t}, v^{2}_{t})$:
\begin{enumerate}
\item[(I):]  $$A(t,X_{t}, v^{2}_{t}):=\Gamma^{2} G_{n}(t,\tilde{X}_{t},v_{t})$$ and $$B_{t}:=R_{t}=\frac{1}{8}\mathbb{E}_{t}\left[\int^{T}_{t}d[M,M]_{u}\right].$$
\item[(II):] $$A(t,X_{t}, v^{2}_{t}):=\Lambda\Gamma G_{n}(t,\tilde{X}_{t},v_{t})$$ and $$B_{t}:=U_{t}=\frac{\rho}{2}\mathbb{E}_{t}\left[\int^{T}_{t}\sigma_{u}d[W,M]_{u}\right].$$
\end{enumerate}
and then the statement follows immediately. See also the terms in Appendix \ref{error_terms}.
\end{proof}
 As we will illustrate in the upcoming sections for Heston-type SVJ models - this formula can be efficiently evaluated, while the neglected error terms do not significantly limit a practical use of the formula. The main ingredients, to get SVJ approximate pricing formula, are expressions for $R_0, U_0$ and $G_{n}(0,\tilde{X}_{0},v_{0})$. Now we provide some insight how the latter term can be expressed under various jump-diffusion settings. 
\begin{remark}\label{rmk:G_jumps}
In particular, we have a closed formula for a log-normal jump diffusion model (e.g. \cite{Bates96} SVJ model):
$$G_{n}(0, \tilde{X}_{0},  v_{0})=BS\left(0, \tilde{X}_{0}, \sqrt{v^{2}_{0}+ n \frac{\sigma^{2}_{J}}{T}}\right)$$
where we modified the risk-free rate used in the Black-Scholes formula to
$$r^{*}=r - \lambda\left(e^{\mu_{J} + \frac{1}{2}\sigma_J^2} - 1\right) + n\frac{\mu_{J} + \frac{1}{2}\sigma_J^2}{T}.$$
A very similar formula for the Merton case is deduced by \cite{Hanson07}. More details will follow in the next sections. Under general (finite-activity) jump diffusion settings, we will need to solve 
$$\int_{\mathbb{R}} BS\left(0, \tilde{X}_{0} + y, v_{0}\right) f_{J_{n}}(y) dy$$
where $f_{J_{n}}=(f^{*n}_{Y})(y)$ is the convolution of the law of $n$ jumps.

Here we provide a list of known results for various popular models.
\begin{itemize}
\item \cite{Kou02} double exponential model:
\begin{eqnarray*}
f^{*(n)}(u)&=&e^{-\eta_{1}u}\sum^{n}_{k=1} P_{n,k}\eta^{k}_{1}\frac{1}{(k-1)!}u^{k-1}\mathbf{1}_{\left\{u\geq0\right\}}\\
&+&e^{-\eta_{2}u}\sum^{n}_{k=1} Q_{n,k}\eta^{k}_{2}\frac{1}{(k-1)!}(-u)^{k-1}\mathbf{1}_{\left\{u<0\right\}}
\end{eqnarray*}
where
\begin{eqnarray*}
P_{n,k}=\sum^{n-1}_{i=k}\binom{n-k-1}{i-k}\binom{n}{i}\binom{\eta_{1}}{\eta_{1}+\eta_{2}}^{i-k}\binom{\eta_{2}}{\eta_{1}+\eta_{2}}^{n-i} p^{i}q^{n-i}
\end{eqnarray*}
for all $1\leq k \leq n-1$, and
\begin{eqnarray*}
Q_{n,k}=\sum^{n-1}_{i=k}\binom{n-k-1}{i-k}\binom{n}{i}\binom{\eta_{1}}{\eta_{1}+\eta_{2}}^{n-i}\binom{\eta_{2}}{\eta_{1}+\eta_{2}}^{i-k} p^{n-i}q^{i}
\end{eqnarray*}
for all $1\leq k \leq n-1$. In addition, $P_{n,n}=p^{n}$ and $Q_{n,n}=q^{n}$.
\item \cite{YanHanson06} model uses log-uniform jump sizes and hence the density is of the form \citep{Killmann01}:
\begin{eqnarray*}
f^{*(n)}(u) = \begin{cases}
        \frac{\sum^{\tilde{n}(n,u)}_{i=0}(-1)^{i}\binom{n}{i}\left(u-na-i(b-a)\right)^{n-1}}{(n-1)!(b-a)^{n}} & \text{ if } na \leq u \leq nb \\
        0 & \text{otherwise.} \\
       \end{cases}
\end{eqnarray*}
where $\tilde{n}(n,u):=\left[\frac{u-na}{b-a}\right]$ is the largest integer less than $\frac{u-na}{b-a}$.\par
\end{itemize}
\end{remark}

\section{SVJ models of the Heston type}\label{sec:heston}

In this section, we apply the previous generic results to derive a pricing formula for SVJ models with the Heston variance process. The aim is not to provide pricing solution for all known/studied models, but rather to detail the derivation for a selected model and comment on possible extension to different models. I.e. we focus on models with dynamics satisfying the following stochastic differential equations 
\begin{eqnarray}\label{log-model-heston}
dX_{t}&=& \left(r - \lambda k-\frac{1}{2}\sigma^{2}_{t}\right)dt + \sigma_{t} \left(\rho dW_{t}  + \sqrt{1-\rho^{2}}d\tilde{W}_{t} \right) +dJ_{t}\\
d\sigma^{2}_{t}&=& \kappa \left(\theta - \sigma^{2}_{t}\right)dt + \nu \sqrt{\sigma^{2}_{t}}dW_{t}
\end{eqnarray}
where $\sigma_{0}$, $\kappa$, $\theta$, $\nu$ are positive constants satisfying the Feller condition $2\kappa\theta \geq \nu^{2}$. The process $\sigma_{t}^2$ represents an instantaneous variance of the price at time $t$, $\theta$ is a long run average level of the variance, $\kappa$ is a rate at which $\sigma_{t}$ reverts to $\theta$ and, last but not least, $\nu$ is a volatility of volatility parameter. We will distinguish between the two cases: 
\begin{itemize}
\item either jump amplitudes follow a Gaussian process (\cite{Bates96} model), 
\item or they are driven by other models, e.g. a log-uniform process (\cite{YanHanson06} model).
\end{itemize}

\subsection{Approximation of the SVJ models of the Heston type}

For a standard Heston model, we have the following results, see \cite{Alos15}: 
\begin{lemma}\label{remarkADV15}
Assume the standard notation from the previous sections alongside specific definitions. Define $\varphi(t):=\int_t^T e^{-\kappa(z-t)}dz.$ We have the following results:
\begin{enumerate}
\item For $s\geq t$ we have
\begin{equation*}
E_t(\sigma^2_s)=\theta+(\sigma_t^2-\theta)e^{-\kappa (s-t)}=\sigma_t^2 e^{-\kappa (s-t)}+\theta (1-e^{-\kappa (s-t)}),
\end{equation*}
so, in particular, this quantity is bounded below by $\sigma_t^2\wedge\theta$ and above by $\sigma_t^2\vee\theta$.

\item $E_t\left(\int_{t}^{T}\sigma _{s}^{2}ds\right) = \theta\left(T-t\right) + \frac{\sigma_{t}^{2}-\theta}{\kappa}\left(1-e^{-\kappa\left( T-t\right) }\right).$

\item $dM_t\ =\ \nu\sigma_t\left(\int_t^T e^{-\kappa (u-t)}du\right) dW_t= \frac{\nu}{\kappa}\,\sigma_t\left(1-e^{-\kappa(T-t)}\right)dW_t.$

\item $U_t:=\frac{\rho}{2} E_t\left(\int_{t}^{T}\sigma _{s}d\left\langle
M,W\right\rangle _{s}\right)=\frac{\rho}{2}\nu\int_{t}^{T}E_t \left(\sigma
_{s}^{2}\right) \left( \int_{s}^{T}e^{-\kappa (u-s)}du\right) ds$
\begin{eqnarray*}
\hspace{2em} =\,\frac{\rho \nu}{2\kappa^{2}}\left\{\theta \kappa \left(
T-t\right) -2\theta +\sigma _{t}^{2}+e^{-\kappa \left( T-t\right) }\left(
2\theta -\sigma _{t}^{2}\right) -\kappa \left( T-t\right) e^{-\kappa \left(
T-t\right) }\left( \sigma _{t}^{2}-\theta \right) \right\}.
\end{eqnarray*}

\item $R_t:=\frac{1}{8}E_t\left(\int_{t}^{T}d\left\langle M,M\right\rangle_s\right) \
=\ \frac{1}{8}\nu ^{2}\int_{t}^{T}E_t\left(\sigma _s^2\right)
\left( \int_s^T e^{-\kappa (u-s)}du\right) ^{2}ds$
\begin{eqnarray*}
&=&\frac{\nu ^{2}}{8\kappa ^{2}}\left\{ \theta \left( T-t\right)
+\frac{\left(\sigma _{t}^{2}-\theta \right)}{\kappa}\left(1-e^{-\kappa \left(T-t\right)}\right)\right. \\
\\
&&\ -\frac{2\theta }{\kappa }\left( 1-e^{-\kappa \left( T-t\right) }\right)
-2\left( \sigma _{t}^{2}-\theta \right) \left( T-t\right) e^{-\kappa \left(T-t\right) } \\
\\
&&\ +\left. \frac{\theta }{2\kappa }\left( 1-e^{-2\kappa \left( T-t\right)}\right)
+\frac{\left( \sigma _{t}^{2}-\theta \right) }{\kappa }
\left(e^{-\kappa(T-t)}-e^{-2\kappa \left( T-t\right) }\right) \right\}.
\end{eqnarray*}

\item $dU_t = \frac{\rho \nu^2}{2}\left(\int_t^T e^{-\kappa(z-t)}\varphi(z)dz\right)\sigma_t dW_t-\frac{\rho\nu}{2}\varphi(t)\sigma^2_t dt$,

\item $dR_t = \frac{\nu^3}{8}\left(\int_t^T e^{-\kappa(z-t)}\varphi(z)^2dz\right) \sigma_t dW_t-\frac{\nu^2}{8}\varphi(t)^2 \sigma_t^2 dt$.
\end{enumerate}
\end{lemma}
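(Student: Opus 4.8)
The plan is to exploit that the variance $\sigma_t^2$ is an affine (CIR) diffusion, so that every conditional expectation in the statement collapses to an elementary deterministic integral. First I would prove (1): because the drift of $d\sigma_t^2=\kappa(\theta-\sigma_t^2)\,dt+\nu\sqrt{\sigma_t^2}\,dW_t$ is affine in $\sigma_t^2$, the function $s\mapsto E_t(\sigma_s^2)$ solves the linear ODE $\tfrac{d}{ds}E_t(\sigma_s^2)=\kappa\bigl(\theta-E_t(\sigma_s^2)\bigr)$ with $E_t(\sigma_t^2)=\sigma_t^2$, whose unique solution is $\theta+(\sigma_t^2-\theta)e^{-\kappa(s-t)}$; the two-sided bound follows because this is a convex combination of $\sigma_t^2$ and $\theta$. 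Integrating (1) in $s$ over $[t,T]$ gives (2) immediately.

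For (3) I would write $M_t=\int_0^t\sigma_s^2\,ds+\int_t^T E_t(\sigma_s^2)\,ds$ and substitute (1), obtaining $M_t=\int_0^t\sigma_s^2\,ds+\theta(T-t)+(\sigma_t^2-\theta)\varphi(t)$ with $\varphi(t)=\tfrac{1}{\kappa}\bigl(1-e^{-\kappa(T-t)}\bigr)$. Differentiating this, using $\varphi'(t)=-1+\kappa\varphi(t)$ together with $d\sigma_t^2=\kappa(\theta-\sigma_t^2)\,dt+\nu\sigma_t\,dW_t$, a short computation shows that all $dt$-terms cancel --- consistently with $M$ being a martingale --- leaving $dM_t=\nu\varphi(t)\sigma_t\,dW_t$, which is the stated formula.

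From (3) one reads off $d[W,M]_u=\nu\varphi(u)\sigma_u\,du$ and $d[M,M]_u=\nu^2\varphi(u)^2\sigma_u^2\,du$. Plugging these into the definitions of $U_t$ and $R_t$, moving $E_t$ inside the time integral (by Fubini, since $\varphi$ is deterministic), replacing $E_t(\sigma_u^2)$ by (1), and evaluating the resulting integrals of polynomials times $e^{-j\kappa(T-u)}$ for $j=0,1,2$ yields the closed forms (4) and (5); this step is routine but lengthy, and matching the given expressions is the only place where care with signs and the factors $1/\kappa$ is really needed.

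Finally, for (6) and (7) I would use the representations $U_t=\tfrac{\rho\nu}{2}\bigl(\theta\,a(t)+(\sigma_t^2-\theta)\,b(t)\bigr)$ and $R_t=\tfrac{\nu^2}{8}\bigl(\theta\,\tilde a(t)+(\sigma_t^2-\theta)\,\tilde b(t)\bigr)$, where $a(t)=\int_t^T\varphi(z)\,dz$, $b(t)=\int_t^T e^{-\kappa(z-t)}\varphi(z)\,dz$ and $\tilde a,\tilde b$ are the analogues with $\varphi$ replaced by $\varphi^2$, and then apply Itô's formula to these smooth functions of $(t,\sigma_t^2)$. The key identities are $a'(t)=-\varphi(t)$, $b'(t)=-\varphi(t)+\kappa b(t)$ (and similarly for $\tilde a,\tilde b$); inserting $d\sigma_t^2$, the drift collapses exactly to $-\tfrac{\rho\nu}{2}\varphi(t)\sigma_t^2$, resp. $-\tfrac{\nu^2}{8}\varphi(t)^2\sigma_t^2$, while the diffusion coefficient is precisely the stated $dW_t$-term. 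The main obstacle throughout is purely computational --- keeping the exponential integrals in (4)--(5) and the Itô drift cancellations in (3), (6), (7) organized --- rather than conceptual, since affineness of the CIR variance makes every expectation explicit.
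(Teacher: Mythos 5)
Your proposal is correct, and every step checks out: the ODE for $E_t(\sigma_s^2)$, the cancellation of the $dt$-terms in $dM_t$ via $\varphi'(t)=-1+\kappa\varphi(t)$, the reduction of $U_t$ and $R_t$ to deterministic integrals of $E_t(\sigma_s^2)\varphi(s)^j$, and the It\^{o} computations for (6)--(7) using $b'(t)=-\varphi(t)+\kappa b(t)$ all land exactly on the stated formulas. The paper does not prove this lemma itself but imports it from Al\`{o}s, de Santiago, and Vives (2015); your derivation is the standard one underlying that reference, so there is nothing to add beyond noting that the martingale property of $\int\nu\sigma_u\,dW_u$ (guaranteed by the finite second moments of the CIR process) is what licenses passing from the SDE to the ODE in step (1).
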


Furthermore, the following lemma is proved in \cite{Alos15}.

\begin{lemma}\label{tercerlema} Let all the objects be well defined as above, then for a standard Heston model we have that\\
\begin{tabular}{cl}
(i)  &   $\int_s^T E_s(\sigma_u^2) du\  \geq\  \frac{\theta\kappa}{2}\left(\int_s^T e^{-\kappa (u-s)} du\right)^2$, \\ \\
(ii) &  $\int_{s}^{T}E_{s}\left( \sigma _u^2\right) du\  \geq\  \sigma _{s}^{2}\left(\int_{s}^{T}e^{-\kappa (u-s)}du\right)$.
\end{tabular}
\end{lemma}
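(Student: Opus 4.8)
The plan is to deduce both bounds directly from the closed form in Lemma~\ref{remarkADV15}(1),
\[
E_s(\sigma_u^2) \;=\; \sigma_s^2\, e^{-\kappa(u-s)} \;+\; \theta\bigl(1-e^{-\kappa(u-s)}\bigr), \qquad s\le u\le T,
\]
by discarding one of the two nonnegative summands and then calling on a one-line scalar estimate; no appeal to the Feller condition is needed, only positivity of $\sigma_s^2$ and $\theta$.

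For (ii) I would keep only the first summand: since $\theta>0$ and $1-e^{-\kappa(u-s)}\ge 0$ on $[s,T]$, we have $E_s(\sigma_u^2)\ge \sigma_s^2\,e^{-\kappa(u-s)}$ pointwise, and integrating in $u$ over $[s,T]$ gives (ii) at once. For (i) I would instead keep only the second summand, $E_s(\sigma_u^2)\ge \theta\bigl(1-e^{-\kappa(u-s)}\bigr)$, reducing the claim (after dividing by $\theta>0$) to the deterministic inequality $\int_s^T\bigl(1-e^{-\kappa(u-s)}\bigr)\,du \ge \tfrac{\kappa}{2}\bigl(\int_s^T e^{-\kappa(u-s)}\,du\bigr)^2$. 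Introducing the primitive $F(u):=\int_s^u e^{-\kappa(w-s)}\,dw$, so that $F(s)=0$, $F'(u)=e^{-\kappa(u-s)}$ and $\kappa F(u)=1-e^{-\kappa(u-s)}$, this is in turn equivalent to $\int_s^T F(u)\,du\ge \tfrac12 F(T)^2$.

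The only idea, a mild one, enters at this last step: since $0\le e^{-\kappa(u-s)}\le 1$ and $F\ge 0$ on $[s,T]$, we have $\tfrac{d}{du}\bigl(\tfrac12 F(u)^2\bigr)=F(u)F'(u)=F(u)\,e^{-\kappa(u-s)}\le F(u)$, and integrating from $s$ to $T$ with $F(s)=0$ yields $\tfrac12 F(T)^2\le\int_s^T F(u)\,du$; since $F(T)=\int_s^T e^{-\kappa(u-s)}\,du$, multiplying through by $\theta\kappa$ recovers (i). Equivalently, with $x:=\kappa(T-s)$ the required scalar inequality reads $x\ge (1-e^{-x})+\tfrac12(1-e^{-x})^2$, which holds because the difference of the two sides vanishes at $x=0$ and has derivative $(1-e^{-x})^2\ge 0$. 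I do not foresee any genuine obstacle: the whole argument is a couple of lines once the two summands are separated, and the only thing to be slightly careful about is that the summand being dropped is nonnegative throughout $[s,T]$, which it clearly is.
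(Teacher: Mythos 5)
Your proof is correct. The paper itself does not reproduce a proof of this lemma --- it simply cites \cite{Alos15} --- and your argument (dropping one nonnegative summand of the closed-form expression $E_s(\sigma_u^2)=\sigma_s^2e^{-\kappa(u-s)}+\theta(1-e^{-\kappa(u-s)})$ for each part, then finishing (i) with the elementary scalar inequality $x\ge(1-e^{-x})+\tfrac12(1-e^{-x})^2$) is precisely the standard derivation given in that reference, so no further comment is needed.
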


\begin{remark}
We can utilize these equalities to get analogue results for Theorem \ref{Generic Approximation formula}. The $\Omega_{n}$ terms can be founded in Appendix \ref{error_terms_HestonSVJ}.
\end{remark}

Now we have all the tools needed to introduce the main practical result - pricing formula 
\begin{corollary}[Heston-type SVJ pricing formula]
\label{SVJ models of the Heston type Approximation} 
Let $G_{n}(0,\tilde{X}_{0},v_{0})$ takes the expression as in Remark \ref{rmk:G_jumps} for a particular jump-type setting, let 
\begin{eqnarray*}
R_0&=&\frac{\nu ^{2}}{8\kappa ^{2}}\left\{ \theta T
+\frac{\left(\sigma _{0}^{2}-\theta \right)}{\kappa}\left(1-e^{-\kappa T}\right)-\frac{2\theta }{\kappa }\left( 1-e^{-\kappa T }\right)
-2\left( \sigma _{0}^{2}-\theta \right)  T e^{-\kappa T } \right. \\
\\
&&\ +\left. \frac{\theta }{2\kappa }\left( 1-e^{-2\kappa T}\right)
+\frac{\left( \sigma _{0}^{2}-\theta \right) }{\kappa }
\left(e^{-\kappa T}-e^{-2\kappa T}\right) \right\}
\end{eqnarray*}
and let 
\begin{eqnarray*}
U_0 =\,\frac{\rho \nu}{2\kappa^{2}}\left\{\theta \kappa
T -2\theta +\sigma _{0}^{2}+e^{-\kappa T }\left(
2\theta -\sigma _{0}^{2}\right) -\kappa T e^{-\kappa
T }\left( \sigma _{0}^{2}-\theta \right) \right\}.
\end{eqnarray*}
 Then the European option fair value is expressed as

\begin{eqnarray*}
V_{0}&=&\sum^{\infty}_{n=0} p_{n}(\lambda T) G_{n}(0,\tilde{X}_{0},v_{0})\\
&+&\sum^{\infty}_{n=0} p_{n}(\lambda T) \Gamma^{2} G_{n}(0,\tilde{X}_{0},v_{0})R_{0} \\
&+&\sum^{\infty}_{n=0} p_{n}(\lambda T)\Lambda\Gamma G_{n}(0,\tilde{X}_{0},v_{0})U_{0}\\
&+&\sum^{\infty}_{n=0} p_{n}(\lambda T)\Omega_{n}
\end{eqnarray*}
where $\Omega_{n}$ are error terms detailed in Appendix \ref{error_terms_HestonSVJ} . The upper bound for any $\Omega_{n}$ is given by
\begin{eqnarray*}
\Omega_{n} \leq  \nu^{2}(\left|\rho\right| + \nu)^{2} \left( \frac{1}{r}\wedge (T-t)\right) \Pi(\kappa,\theta)
\end{eqnarray*}
where $\Pi(\kappa,\theta)$ is a positive function. Therefore, the total error 
\begin{eqnarray*}
\Omega=\sum^{\infty}_{n=0} p_{n}(\lambda T)\Omega_{n}
\end{eqnarray*}
is bounded by the same constant.
\end{corollary}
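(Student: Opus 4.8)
The plan is to handle the two assertions separately. The decomposition identity for $V_0$ is immediate: specialising Theorem~\ref{Generic Approximation formula} to the Heston variance dynamics \eqref{log-model-heston}, the quantities $R_0$ and $U_0$ are exactly items~(5) and~(4) of Lemma~\ref{remarkADV15} evaluated at $t=0$, the projected variance $v_0^2=\tfrac1T\,\mathbb{E}_0\!\int_0^T\sigma_s^2\,ds$ is item~(2), and $G_n(0,\tilde X_0,v_0)$ is the closed form of Remark~\ref{rmk:G_jumps} for the jump law under consideration; substituting into the conclusion of Theorem~\ref{Generic Approximation formula} gives the displayed formula with $\Omega_n$ the error terms collected in Appendix~\ref{error_terms_HestonSVJ}. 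So the entire content of the statement is the estimate on those $\Omega_n$.

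Observe first that $\Omega_n$ is a \emph{finite} sum: it is what is left of the two applications of Theorem~\ref{Generic Deco} used to prove Theorem~\ref{Generic Approximation formula} (with $(A,B)=(\Gamma^2G_n,R)$ and $(A,B)=(\Lambda\Gamma G_n,U)$) after the Feynman--Kac and martingale pieces cancel, the $\tilde W$-covariations vanish because $M,R,U$ are driven by $W$ only, and the drift terms $\propto(v_u^2-\sigma_u^2)$ annihilate in pairs. The surviving summands each have the form
\[\mathbb{E}\!\left[\int_0^T e^{-ru}\,(D\,G_n)(u,\tilde X_u,v_u)\,\Theta_u\,du\right],\]
where $D\in\{\Gamma^3,\Gamma^4,\Lambda\Gamma^2,\Lambda\Gamma^3,\Lambda^2\Gamma,\Lambda^2\Gamma^2\}$ (each variance derivative being converted to a $\Gamma$ via $\partial_{\sigma^2}BS=\tfrac{T-u}2\Gamma BS$) and $\Theta_u$ is a product of powers of $\nu$, $\rho$, $\varphi(u)$, $\sigma_u$, and of $R_u$ or $U_u$. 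I would bound each summand by estimating the Black--Scholes factor and the coefficient separately.

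For the Black--Scholes factor: since every $D$ contains a $\Gamma$, the explicit expressions for $\Gamma BS,\Lambda\Gamma BS,\Gamma^2BS$ from Section~\ref{sec:preliminaries} (equivalently Lemma~\ref{lemaclau} in its pointwise case $t=s$) yield $\bigl|(D\,G_n)(u,\tilde X_u,v_u)\bigr|\le C\sum_{1\le m\le m_D}\bigl(\int_u^T\mathbb{E}_u(\sigma_s^2)\,ds\bigr)^{-m/2}$, with $m_D=2j-1$ for $D=\Gamma^j$ and $m_D=2j$ for $D=\Lambda\Gamma^j$; this bound is uniform in the log-price and does not see the jump law, so it passes through the average $\mathbb{E}_{J_n}$ defining $G_n$ \emph{uniformly in $n$}. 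For the coefficient, Lemma~\ref{remarkADV15} makes every Heston (co)variation explicit — $d[M,M]_u=\nu^2\varphi(u)^2\sigma_u^2\,du$, $d[W,M]_u=\nu\varphi(u)\sigma_u\,du$, and $d[W,R]_u,d[W,U]_u,d[M,R]_u,d[M,U]_u$ read off from items~(6)--(7) — together with $0\le R_u\le\tfrac18\nu^2\varphi(u)^2\!\int_u^T\mathbb{E}_u(\sigma_s^2)\,ds$ and $|U_u|\le\tfrac{|\rho|\nu}{2}\varphi(u)\!\int_u^T\mathbb{E}_u(\sigma_s^2)\,ds$; so $\Theta_u$ carries only non-negative powers of $\nu,|\rho|,\varphi(u),\sigma_u$ and of $\int_u^T\mathbb{E}_u(\sigma_s^2)\,ds$. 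Finally I use Lemma~\ref{tercerlema} to absorb the negative powers of $\int_u^T\mathbb{E}_u(\sigma_s^2)\,ds$: pair each factor $\sigma_u^2$ (and each factor $\int_u^T\mathbb{E}_u\sigma_s^2\,ds$ supplied by $R_u$ or $U_u$) against one factor $\bigl(\int_u^T\mathbb{E}_u\sigma_s^2\,ds\bigr)^{-1}\le\sigma_u^{-2}\varphi(u)^{-1}$ from part~(ii), and dispose of whatever negative power of the integral remains via $\bigl(\int_u^T\mathbb{E}_u\sigma_s^2\,ds\bigr)^{-1}\le\tfrac{2}{\theta\kappa}\varphi(u)^{-2}$ from part~(i); in every summand the $\sigma_u$-powers then cancel exactly and the leftover power of $\varphi(u)$ is non-negative, so the integrand reduces to $C\,\nu^a|\rho|^b\,\varphi(u)^{c}$ with $c\ge0$ and $(a,b)\in\{(4,0),(3,1),(2,2)\}$, whence $\nu^a|\rho|^b\le\nu^2(|\rho|+\nu)^2$. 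Using $\varphi(u)\le\varphi(0)=\tfrac{1-e^{-\kappa T}}{\kappa}\le\kappa^{-1}$ and $\int_0^Te^{-ru}\,du\le\tfrac1r\wedge T$, this gives $|\Omega_n|\le\nu^2(|\rho|+\nu)^2\bigl(\tfrac1r\wedge(T-t)\bigr)\Pi(\kappa,\theta)$ with $\Pi$ positive and, crucially, independent of $n$ (no $\sigma_0$ or $\lambda$ survives, the CIR powers having cancelled and the jump law having only contributed the harmless average); since $\sum_{n\ge0}p_n(\lambda T)=1$ it follows that $|\Omega|=\bigl|\sum_{n\ge0}p_n(\lambda T)\Omega_n\bigr|\le\sum_{n\ge0}p_n(\lambda T)|\Omega_n|\le\nu^2(|\rho|+\nu)^2\bigl(\tfrac1r\wedge(T-t)\bigr)\Pi(\kappa,\theta)$.

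I expect the genuinely delicate point to be the matching step just sketched: one must verify, term by term over the explicit list in Appendix~\ref{error_terms_HestonSVJ}, that the positive powers of the \emph{unbounded} CIR factor $\sigma_u$ carried by $\Theta_u$ are \emph{exactly} absorbed by the negative powers furnished by Lemma~\ref{tercerlema}(ii), while the leftover exponent of the maturity-singular factor $\varphi(u)$ stays non-negative, so that the $u$-integral converges and contributes only the benign factor $\tfrac1r\wedge(T-t)$. This is careful but mechanical bookkeeping, resting on the precise algebraic form of $R_t,U_t,v_t^2$ from Lemma~\ref{remarkADV15}, and it is where the Feller condition $2\kappa\theta\ge\nu^2$ is used — through the positivity of $\sigma_u$, which legitimises the divisions involved.
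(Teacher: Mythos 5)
Your proposal is correct and follows essentially the same route as the paper's own proof (given in Appendix C): plug the explicit Heston quantities of Lemma \ref{remarkADV15} into Theorem \ref{Generic Approximation formula}, bound each residual term via Lemma \ref{lemaclau} (uniformly in the log-price, hence uniformly over the $\mathbb{E}_{J_n}$ average and in $n$), absorb the powers of $\sigma_u$ and of $\varphi(u)$ with Lemma \ref{tercerlema}(ii) and then (i), and finish with $\int_0^T e^{-ru}\,du\le \frac1r\wedge T$ and $\sum_n p_n(\lambda T)=1$. The only cosmetic difference is your aside attributing the division steps to the Feller condition, which the paper does not invoke at this point.
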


\begin{proof}
We plug-in the Heston volatility model dynamics into Theorem \ref{Generic Approximation formula}. Using the integrability of the Black-Scholes function, Fubini Theorem and the fact that the upper bound of Lemma \ref{lemaclau} does not depend on the log spot price, the upper bound can be used for every $G_{n}$ function. Using Lemma \ref{remarkADV15} and Lemma \ref{tercerlema} we prove the corollary. The whole proof is in Appendix \ref{UB_HestonSVJ}.
\end{proof}

\begin{remark}[Approximate fractional SVJ model]
For the model introduced by \cite{PospisilSobotka16amf} one can derive a very similar decomposition as in Corollary \ref{SVJ models of the Heston type Approximation}. In fact, only the terms $R_0$ and $U_0$ have to be changed while the other terms remain the same.
\end{remark}

\subsection{Numerical analysis of the SVJ models of the Heston type}\label{Sec:NumHeston}
In this section, we compare the newly obtained approximation formula for option prices under \cite{Bates96} model (i.e. log-normal jump sizes alongside Heston model's instantaneous variance) with the market standard approach for pricing European options under SVJ models - the Fourier-transform based pricing formula. The comparison is performed with two important aspects in mind:
\begin{itemize}
\item practical precision of the pricing formula when neglecting the total error term $\Omega$,
\item efficiency of the formula expressed in terms of the computational time needed for particular pricing tasks.
\end{itemize}

In particular, we utilize a semi-closed form solution with one numerical integration as a reference price \citep{BaustianMrazekPospisilSobotka17asmb} alongside a classical solution derived by \cite{Bates96}\footnote{With a slight modification mentioned in \cite{Gatheral06} to not suffer the "Heston trap" issues.}. The numerical integration errors according to \cite{BaustianMrazekPospisilSobotka17asmb} should be typically well beyond $10^{-10}$, hence we can take the numerically computed prices as the reference prices for the comparison.

Due to the theoretical properties of the total error term $\Omega$, we illustrate the approximation quality for several values of $\rho$ and $\nu$ while keeping other parameters fixed\footnote{The considered model and market parameters take the following values: $S_0=100$; $r=0.001$; $\tau=0.3$; $v_0=0.25$; $\kappa=1.5$; $\theta=0.2$; $\lambda = 0.05$; $\mu_J = -0.05$; $\sigma_J = 0.5$.}.      
\begin{figure}
\includegraphics[width=0.95\textwidth]{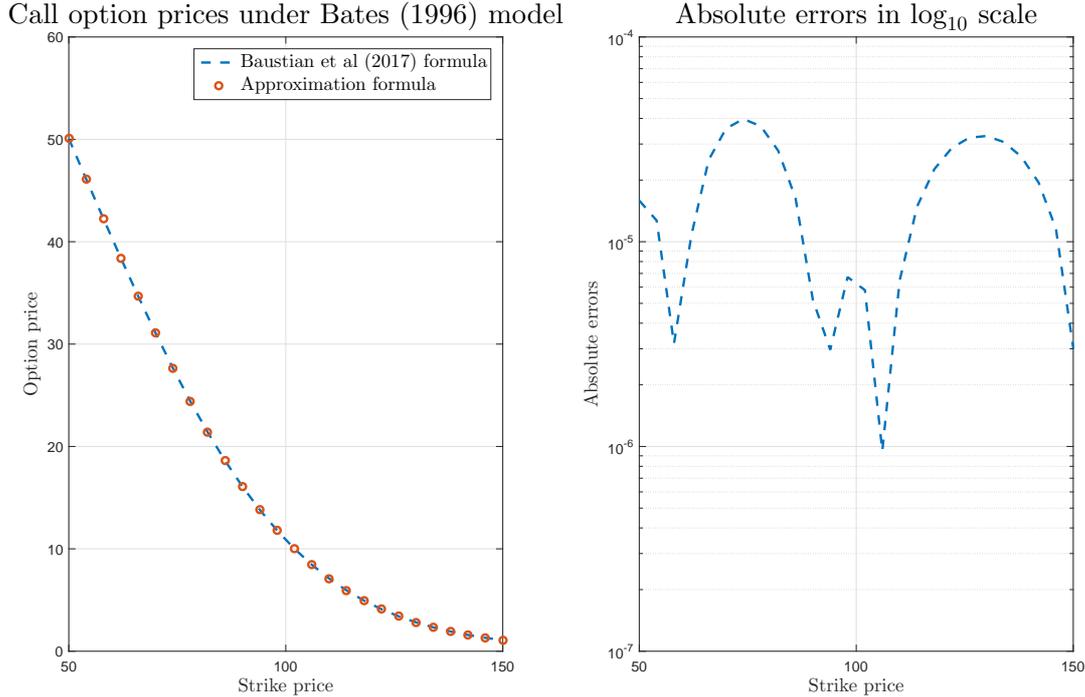}
\caption{Approximation and reference prices for $\rho=-0.2$, $\nu= 5\% $ and $\tau = 0.3$.}
\label{Fig:lowNuRho}
\end{figure}

In Figure \ref{Fig:lowNuRho}, we inspect a mode of low volatility of the spot variance $\nu$ and low absolute value of the instantaneous correlation  $\rho$ between the two Brownian motions. The errors for an option price smile that corresponds to $\tau = 0.3$ are within $10^{-4}-10^{-6}$ range, while slightly better absolute errors were obtained at-the-money. Increasing either the absolute value of $\rho$ or volatility $\nu$ should, in theory, worsen the computed error measures. However, if only one of the values is increased we are still able to keep the errors below $10^{-3}$  in most of the cases, see Figure \ref{Fig:lowNuhighRho}.
\begin{figure}
\includegraphics[width=0.95\textwidth]{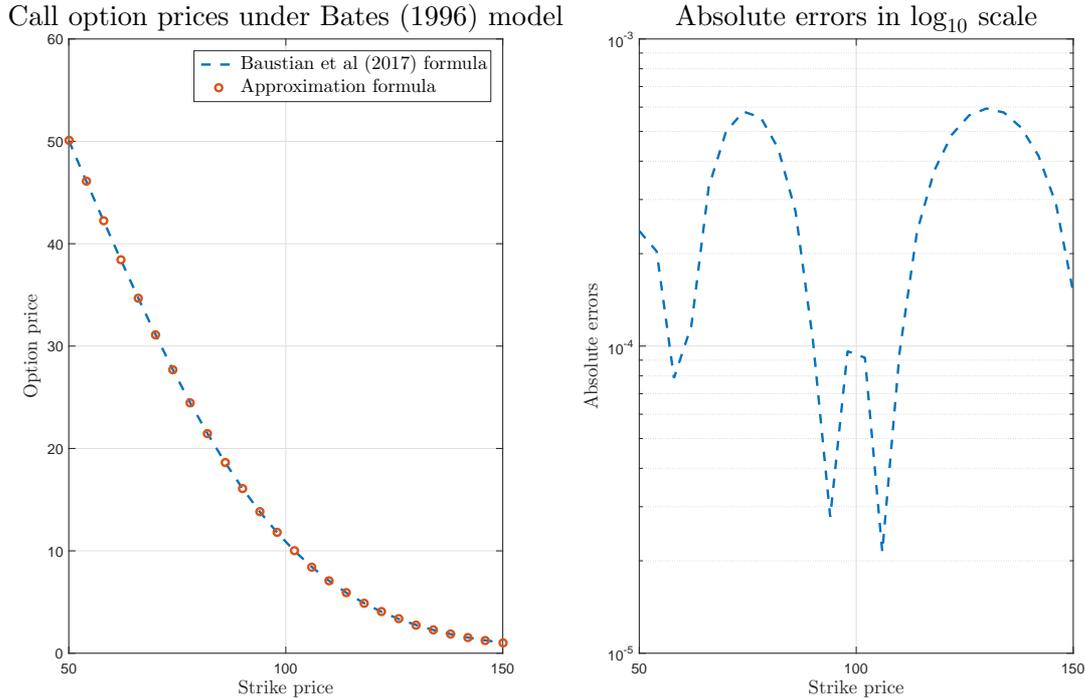}
\caption{Approximation and reference prices for $\rho=-0.8$, $\nu= 5\% $ and $\tau = 0.3$.}
\label{Fig:lowNuhighRho}
\end{figure}

Last but not least, we illustrate the approximation quality for parameters that are not well suited for the approximation. This is done by setting $\nu = 50\%$, correlation $\rho=-0.8$ and a smile with respect to $\tau = 3$. The obtained errors are depicted by Figure \ref{Fig:highNURho}. Despite the values of parameters, the shape of the option price curve remains fairly similar to the one obtained by a more precise semi-closed formula.

Main advantage of the proposed pricing approximation lies in its computational efficiency -- which might be advantageous for many tasks in quantitative finance that need fast evaluation of derivative prices. To inspect the time consumption we set up three pricing tasks. We use a batch of $100$ call options with different strikes and times to maturities that involves all types of options\footnote{It includes OTM, ATM, ITM options with short-, mid- and long-term times to maturities}. In the first task, we evaluate prices for the batch with respect to $100$ (uniformly) randomly sampled parameter sets. This should encompass a similar number of price evaluations as a market calibration task with a very good initial guess. Further on, we repeat the same trials only for $1000$ and $10000$ parameter sets, to mimic the number of evaluations for a typical local-search calibration and a global-search calibration respectively, for more information about calibration tasks see e.g. \cite{MikhailovNogel03} and \cite{MrazekPospisilSobotka16ejor}.  

\begin{figure}
\includegraphics[width=0.95\textwidth]{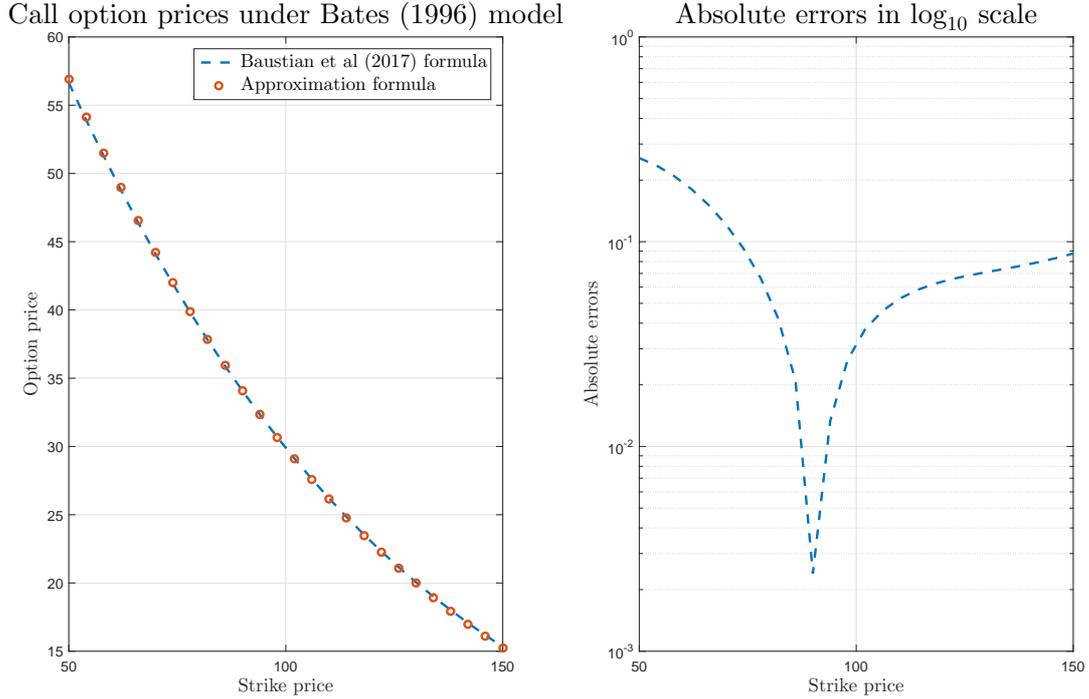}
\caption{Approximation and reference prices for $\rho=-0.8$, $\nu= 50\% $ and $\tau = 3$.}
\label{Fig:highNURho}
\end{figure}

The obtained computational times are listed in Table \ref{Tab:Res1}. Unlike the formulas with numerical integration, the proposed approximation has almost linear dependency of computational time on the number of evaluated prices. Also the results vary based on the randomly generated parameter values for numerical schemes much more than for the approximation -- this is caused by adaptivity of numerical quadratures that were used\footnote{For both \cite{BaustianMrazekPospisilSobotka17asmb} and \cite{Gatheral06} formulas we use an adaptive  Gauss-Kronrod(7,15) quadrature.}. The newly proposed approximation is typically $3\times$ faster compared to the classical two integral pricing formula and the computational time consumption does not depend on the model- nor on market-parameters.

\begin{table}[ht]
\caption{Efficiency of the Bates SVJ pricing formulas}\label{Tab:Res1}
\centering
\begin{tabular}{lcrc}
Pricing approach    & Task & Time$^\dagger$ [sec]  & Speed-up factor\\
\toprule
\multirow{3}{*}{Approximation formula}     &  $\#1$    & $0.97$ &   $3.23\times$  \\
          & $\# 2$         & $10.03$  &   $2.94\times$ \\
          & $\# 3$     & $99.67$  &   $2.83\times$\\
\midrule
\multirow{3}{*}{\cite{BaustianMrazekPospisilSobotka17asmb}}     &  $\#1$    & $2.09$ &   $1.52\times$  \\
          & $\# 2$         & $17.28$  &   $1.71\times$  \\
          & $\# 3$     & $135.95$  &   $2.01\times$ \\
\midrule
\multirow{3}{*}{\cite{Gatheral06}}     & $\# 1$    & $3.18$  & -  \\
          & $\# 2$         & $29.48$ &     - \\
          & $\# 3$     & $281.72$  &   - \\
\bottomrule
\end{tabular}\\
$^\dagger$ {\footnotesize The results were obtained on a PC with Intel Core i7-6500U CPU and 8 GB RAM.}

\end{table}

\par

\section{The approximated implied volatility surface for SVJ models of the Heston type}\label{sec:implied}

In the above section, we have computed a bound for the error between the exact price and the approximated pricing formula for the SVJ models of the Heston type. Now, we are going to derive an approximation of the implied volatility surface alongside the corresponding ATM implied volatility profiles. These approximations can help us to understand the volatility dynamics of studied models in a better way.
 
\subsection{Deriving an approximated implied volatility surface for SVJ models of the Heston type}

The price of an European call option with strike $K$ and maturity $T$ is an observable quantity which will be referred to
as $P_0^{obs}=P^{obs}(K,T)$. Recall that the \emph{implied volatility} is defined as the value $I(T,K)$ that satisfies
\begin{equation*}
BS(0,S_0,I(T,K))= P_{0}^{obs}.
\end{equation*}

Using the results from the previous section, we are going to derive an approximation to the implied volatility as in \cite{Fouque03a}. We use the idea to expand the implied volatility function $I(T,K)$ with respect to two scales. For illustration of the idea, we recall that according to asymptotic sequences $\{\delta^k\}_{k=0}^\infty$, $\{\epsilon^k\}_{k=0}^\infty$ converging to $0$ are considered. Thus, we can write
\[
f=f_{0,0}+\delta f_{1,0} + \epsilon f_{0,1} + O((\delta + \epsilon)^{2}),
\]
for a particular function $f$. Let $\epsilon=\rho\nu$ and $\delta=\nu^{2}$, then we expand $I(T,K)$ with respect to these two scales as
\begin{eqnarray*}
I(T,K)= v_{0} + \rho\nu I_{1}(T,K) + \nu^{2}I_{2}(T,K) + O((\rho \nu + \nu^{2})).
\end{eqnarray*}
We will denote by $\hat{I}(T,K)= v_{0} + \rho\nu I_{1}(T,K) + \nu^{2}I_{2}(T,K)$ the approximation to the implied volatility and by $\hat{V}(0,x,v_{0})$ the approximation to the option price which was obtained in Corollary  \ref{SVJ models of the Heston type Approximation}. We know that according to Corollary  \ref{SVJ models of the Heston type Approximation}: 
\begin{eqnarray*}
\hat{V}(0,x,v_{0})&=&\sum^{\infty}_{n=0} p_{n}(\lambda T) BS(0,x+J_{n},v_{0})\\
&+&\sum^{\infty}_{n=0} p_{n}(\lambda T) \Gamma^{2} BS(0,x+J_{n},v_{0})R_{0} \\
&+&\sum^{\infty}_{n=0} p_{n}(\lambda T)\Lambda\Gamma BS(0,x+J_{n},v_{0})U_{0}.
\end{eqnarray*}
To simplify the notation, we define
\begin{eqnarray*}
\gamma_{n}:=\frac{d^{2}_{+}(x,r,\sigma)-d^{2}_{+}(x+J_{n},r,\sigma)}{2}
\end{eqnarray*}
and
\begin{eqnarray*}
D_{1}(x, J_{n}, \sigma, T)&:=&\mathbb{E}_{J_{n}}\left[\frac{e^{J_{n}+\gamma_{n}}}{\sigma T}\left(1-\frac{d_+(x+J_{n}, r,\sigma)}{\sigma \sqrt{T}}\right)\right],\\
D_{2}(x, J_{n}, \sigma, T)&:=&\mathbb{E}_{J_{n}}\left[\frac{e^{J_{n}+\gamma_{n}}}{\sigma^{3} T^{2}}\left(d^{2}_{+}(x+J_{n},r, \sigma)-\sigma d_+(x+J_{n},r,\sigma) \sqrt{T}-1\right)\right].
\end{eqnarray*}
Using the fact that
\begin{eqnarray*}
\partial_{\sigma}BS(t,x,\sigma)=\frac{e^{x}e^{-d^{2}_{+}(\sigma)/2}\sqrt{T-t}}{\sqrt{2\pi}},
\end{eqnarray*}
we can re-write the approximated price as
\begin{eqnarray*}
\hat{V}(0,x,v_{0})&=&\sum^{\infty}_{n=0} p_{n}(\lambda T) BS(0,x+J_{n},v_{0})\\
&+& \partial_{\sigma}BS(v_{0}) \sum^{\infty}_{n=0} p_{n}(\lambda T)D_{1}(x, J_{n}, \sigma, T)U_0\\
&+& \partial_{\sigma}BS(v_{0}) \sum^{\infty}_{n=0} p_{n}(\lambda T)  D_{2}(x, J_{n}, v_{0}, T)R_0.
\end{eqnarray*}
where we write $BS(v_0)$ as a shorthand for $BS(0,x,v_0)$. Consider now the Taylor expansion of $BS(0,x,I(T,K))$ around $v_0$:
\begin{eqnarray}
BS(0,x,I(T,K))& = & BS(v_0) + \partial_{\sigma} B(v_0)(\rho\nu I_1(T,K) + \nu^2 I_2(T,K) + \cdots) \nonumber \\
& & +\frac{1}{2}\partial^{2}_{\sigma} BS(v_0)(\rho\nu I_1(T,K) + \nu^2 I_2(T,K) + \cdots)^2
+ \ \cdots  \nonumber \\ \nonumber
& = & BS(v_0) + \rho\nu\partial_{\sigma} BS(v_0) I_1(T,K) + \nu^2\partial_{\sigma} BS(v_0) I_2(T,K) \
+ \ \cdots. \label{expansionBS}
\end{eqnarray}

Noticing that
$$BS(v_0)=\sum^{\infty}_{n=0} p_{n}(\lambda T) BS(0,x+J_{n},v_{0}) $$
and equating
$$\hat{V}(0,x,v_{0})=BS(0,x,I(T,K)),$$
we obtain
\begin{eqnarray}
\hat{I}_{1}(T,K)&:=&\rho\nu I_1(T,K) \text{ } = \text{ } U_{0} \sum^{\infty}_{n=0} p_{n}(\lambda T)D_{1}(x, J_{n}, v_{0}, T),\\
\hat{I}_{2}(T,K)&:=&\nu^2 I_2(T,K)\text{ } = \text{ } R_{0} \sum^{\infty}_{n=0} p_{n}(\lambda T)  D_{2}(x, J_{n}, v_{0}, T).
\end{eqnarray}

Hence, we have the following approximation of implied volatility
\begin{eqnarray*}
\hat{I}(T,K)&=&v_{0} +U_{0}\sum^{\infty}_{n=0} p_{n}(\lambda T)D_{1}(x, J_{n}, v_{0}, T)\\
&+& R_{0}\sum^{\infty}_{n=0} p_{n}(\lambda T)  D_{2}(x, J_{n}, v_{0}, T).
\end{eqnarray*}

In particular, when we look at the ATM curve, we have that
\begin{eqnarray*}
\hat{I}^{ATM}(T)&=&v_{0} +  U_{0} \sum^{\infty}_{n=0} p_{n}(\lambda T)\mathbb{E}_{J_{n}}\left[\frac{e^{J_{n}+\gamma_{n}}}{v_{0} T}\left(\frac{1}{2} - \frac{J_{n}}{T v^{2}_{0}}\right)\right]\\
&-&R_{0}\sum^{\infty}_{n=0} p_{n}(\lambda T)  \mathbb{E}_{J_{n}}\left[\frac{e^{J_{n}+\gamma_{n}}}{v_{0} T}\left(\frac{1}{4}+\frac{1}{v^{2}T}-\frac{J^{2}_{n}}{v^{4}_{0}T^{2}}\right)\right].
\end{eqnarray*}

\begin{remark}
When $T$ converges to $0$, the dynamics of the model is the same as in the Heston model. This is due to the behavior of the Poisson process when $T\downarrow 0$.
\end{remark}
  
\subsection{Deriving an approximated implied volatility surface for Bates model}
The Bates model is a particular example of SVJ model of the Heston type. The fact that jumps are also log-normal makes the model more tractable. In this section, we will adapt the generic formulas to this particular case. In this model, 
after each jump, the drift- and volatility-like parameters will change. We define 
$$\tilde{v}^{(n)}_{0}=\sqrt{v^{2}_{0} + n\frac{\sigma^{2}_{J}}{T}}$$
as the new volatility and
$$\tilde{r}_{n}=r- \lambda\left(e^{\mu_{J} + \frac{1}{2}\sigma_J^2} - 1\right) + n\frac{\mu_{J} + \frac{1}{2}\sigma_J^2}{T}$$
as the new drift. The parameter $n$ is the number of realized jumps, $\mu_{J}$ and $\sigma_{J}$ are the jump-size parameters and $\lambda$ is the jump intensity. For simplicity, we denote:
$$c_{n}:=- \lambda\left(e^{\mu_{J} + \frac{1}{2}\sigma_J^2} - 1\right) + n\frac{\mu_{J} + \frac{1}{2}\sigma_J^2}{T}.$$

As a consequence, we have that
\begin{eqnarray*}
d_{\pm}\left(x,\tilde{r}_{n},\tilde{v}^{(n)}_{0}\right)= \frac{x-\ln K+ \tilde{r}_{n}T}{\tilde{v}^{(n)}_{0} \sqrt{T}} \pm \frac{\tilde{v}^{(n)}_{0} \sqrt{T}}{2}.
\end{eqnarray*}
Following the steps done in the generic formula, we can define the variables
\begin{eqnarray*}
D_{B,1}\left(x,\tilde{r}_{n},\tilde{v}^{(n)}_{0}, T\right)&=&\frac{e^{\gamma_{n}}}{\tilde{v}^{(n)}_0 T}\left(1-\frac{d_+\left(x,\tilde{r}_{n},\tilde{v}^{(n)}_0\right)}{\tilde{v}^{(n)}_0 \sqrt{T}}\right),\\
D_{B,2}\left(x,\tilde{r}_{n}, \tilde{v}^{(n)}_{0}, T\right)&=&\frac{e^{\gamma_{n}}}{\tilde{v}^{(n)}_0 T}\left(\frac{d_+^2\left(x,\tilde{r}_{n},\tilde{v}^{(n)}_0\right)-\tilde{v}^{(n)}_0 d_+\left(x,\tilde{r}_{n},\tilde{v}^{(n)}_0\right) \sqrt{T}-1}{\left(\tilde{v}^{(n)}_0\right)^2 T} \right).
\end{eqnarray*}
It follows that
\begin{eqnarray}
\hat{I}_{B,1}(T,K)&=&\rho\nu I_{B,1}(T,K) \text{ } = \text{ } U_0\sum^{\infty}_{n=0} p_{n}(\lambda T)D_{B,1}\left(x,\tilde{r}_{n},\tilde{v}^{(n)}_{0}, T\right),\\
\hat{I}_{B,2}(T,K)&=&\nu^2 I_{B,2}(T,K) \text{ }  = \text{ } R_0\sum^{\infty}_{n=0} p_{n}(\lambda T) D_{B,2}\left(x,\tilde{r}_{n}, \tilde{v}^{(n)}_{0}, T\right).
\end{eqnarray}
The approximation of the implied volatility surface has the following shape
\begin{eqnarray*}
\hat{I}_{B}(T,K)&=&v_{0}+U_0 \sum^{\infty}_{n=0}p_{n}(\lambda T)\frac{e^{\gamma_{n}}}{\tilde{v}^{(n)}_{0} T}\left(1-\frac{d_+\left(x,\tilde{r}_{n},\tilde{v}^{(n)}_{0}\right)}{\tilde{v}^{(n)}_{0} \sqrt{T}}\right)\\
&+&R_0 \sum^{\infty}_{n=0}p_{n}(\lambda T)\frac{e^{\gamma_{n}}}{\tilde{v}^{(n)}_{0} T}\left(\frac{d_+^2\left(x,\tilde{r}_{n},\tilde{v}^{(n)}_{0}\right)-\tilde{v}^{(n)}_{0} d_+\left(x,\tilde{r}_{n},\tilde{v}^{(n)}_{0}\right) \sqrt{T}-1}{\left(\tilde{v}^{(n)}_{0}\right)^2 T}\right).
\end{eqnarray*}
In particular, the ATM implied volatility curve under the studied model takes the form:
\begin{eqnarray*}
\hat{I}^{ATM}_{B}(T)&=&v_{0}+U_0 \sum^{\infty}_{n=0}p_{n}(\lambda T)\frac{e^{\gamma^{ATMBates}_{n}}}{\tilde{v}^{(n)}_{0} T}\left(\frac{1}{2}- \frac{c_{n}}{\left(\tilde{v}^{(n)}_{0}\right)^{2}}\right)\\
&-&R_0 \sum^{\infty}_{n=0}p_{n}(\lambda T)\frac{e^{\gamma^{ATMBates}_{n}}}{\tilde{v}^{(n)}_{0} T}\left(\frac{1}{4}+\frac{1}{\left(\tilde{v}^{(n)}_{0}\right)^{2}T}-\frac{c^{2}_{n}}{\left(\tilde{v}^{(n)}_{0}\right)^{4}}\right)
\end{eqnarray*}
where 
\begin{eqnarray*}
\gamma^{ATMBates}_{n}= -\frac{1}{2}\left(c_{n}T +\frac{c^{2}_{n}T}{\left(\tilde{v}^{(n)}_{0}\right)^{2}}\right).
\end{eqnarray*}

\subsection{Numerical analysis of the approximation of the implied volatility for the Bates case}
\begin{figure}
\includegraphics[width=0.95\textwidth]{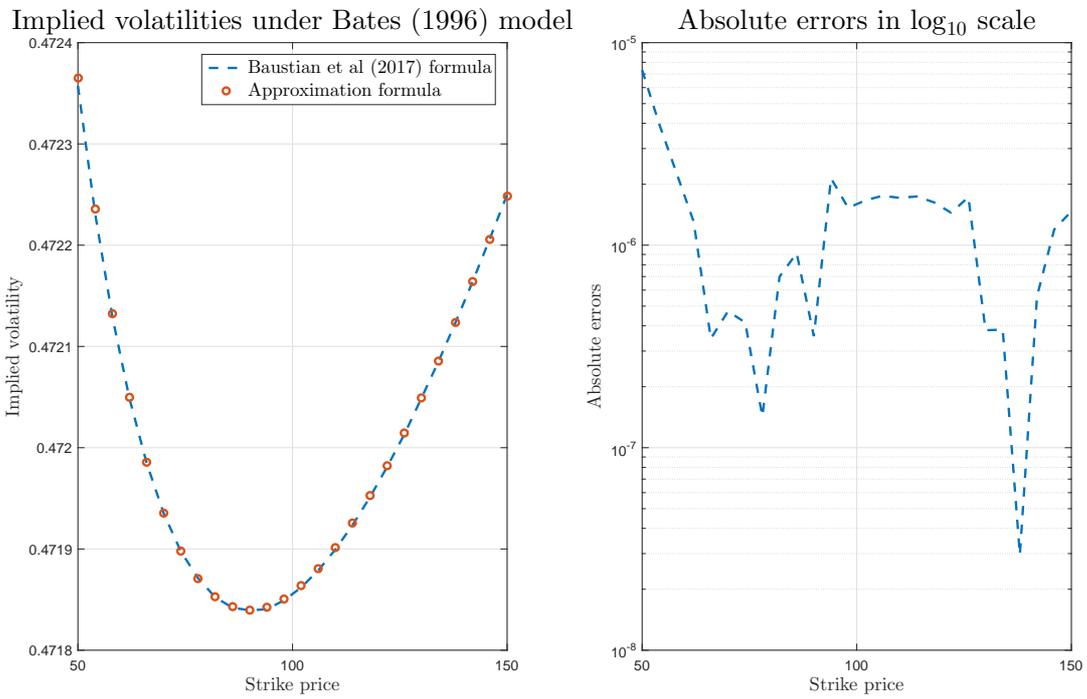}
\caption{Approximation and reference implied volatilities for $\rho=-0.2$, $\nu= 5\% $ and $\tau = 0.3$.}
\label{Fig:lowNuRho_I}
\end{figure}

In the previous section we have compared the approximation and semi-closed form formulas for option prices under \cite{Bates96} model. For this model, we also illustrate the approximation quality in terms of implied volatilities.

Because there is no exact closed formula for implied volatilities under the studied model, we take as a reference price the one obtained by means of the complex Fourier transform \citep{BaustianMrazekPospisilSobotka17asmb}. Once we have computed the prices 
we use a numerical inversion to obtain the desired implied volatilities.

As previously, we start by comparing implied volatilities for well-suited parameter sets. The illustration in Figure \ref{Fig:lowNuRho_I} is obtained by setting $\rho = -0.1$, $\nu = 5\%$ and other parameters as in Section \ref{Sec:NumHeston}. Typically, for a well-suited parameter set, the absolute approximation errors stay within the range $10^{-5}-10^{-7}$.

\begin{figure}
\includegraphics[width=0.95\textwidth]{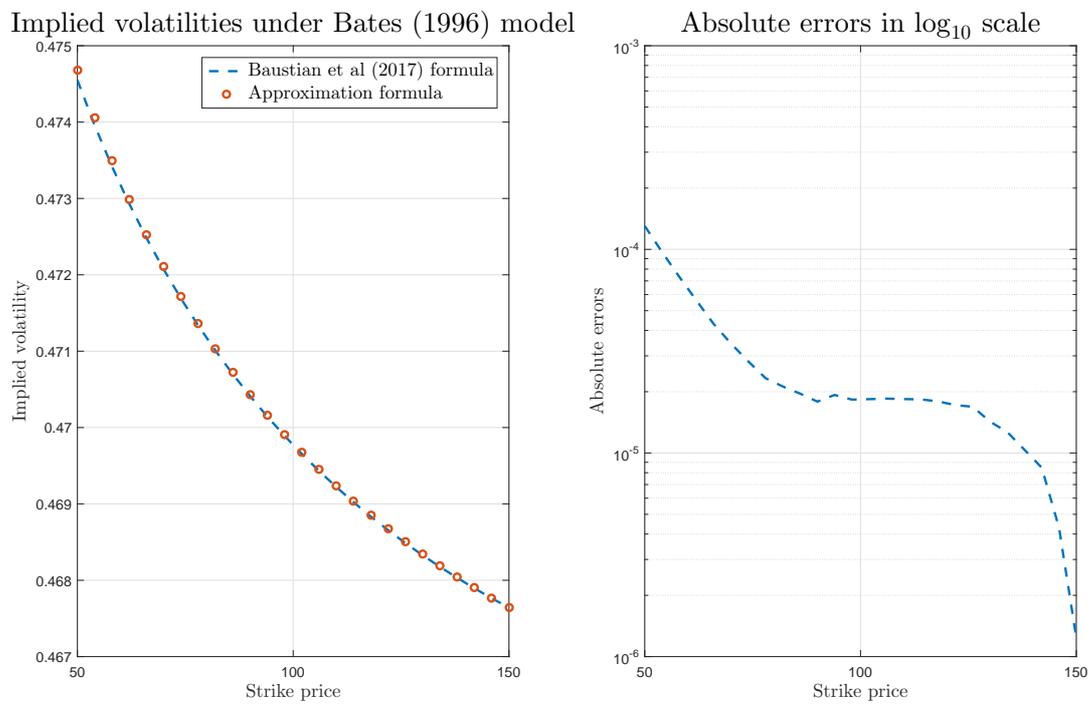}
\caption{Approximation and reference implied volatilities for $\rho=-0.8$, $\nu= 5\% $ and $\tau = 0.3$.}
\label{Fig:lowNuhighRho_I}
\end{figure}

Even for not entirely well-suited parameters we are able to obtain reasonable errors especially for ATM options, see Figures \ref{Fig:lowNuhighRho_I} and \ref{Fig:highNURho_I}. In the mode of high volatility $\nu$ of the variance process and high absolute value of the instantaneous correlation $\rho$, the curvature of the smile is not fully captured. However, the errors are typically well below $10^{-2}$ even in this adverse setting.

\begin{figure}
\includegraphics[width=0.95\textwidth]{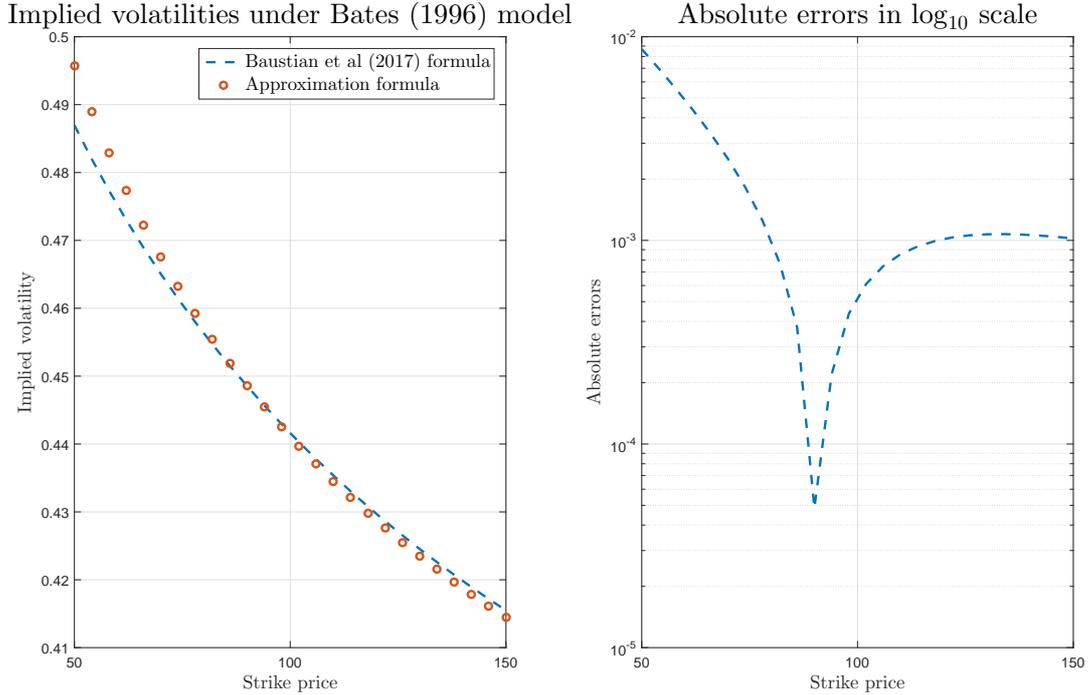}
\caption{Approximation and reference implied volatilities for $\rho=-0.8$, $\nu= 50\% $ and $\tau = 3$.}
\label{Fig:highNURho_I}
\end{figure}

\section{Conclusion}\label{sec:conclusion}

The aim of the paper was to derive a generic decomposition formula for SVJ option pricing models with finite activity jumps. In Section \ref{sec:model} we derived this decomposition by extending the results obtained by \cite{Alos12} for \cite{Heston93} SV model. Newly obtained decomposition is rather versatile since it does not need to specify the underlying volatility process and only common integrability and specific sample path properties are required. 

Particular approximation formulas for several SVJ models were presented in Section \ref{sec:heston} together with the numerical comparison for the \cite{Bates96} model for which we showed that the newly proposed approximation is typically three times faster compared to the classical two integral semi-closed pricing formula. Moreover, its computational time does not depend on the model parameters nor on market data. The biggest advantage of the proposed pricing approximation therefore lies in its computational efficiency, which is advantageous for many tasks in quantitative finance such as calibration to real market data that can lead to an extensive number of formula evaluations for SVJ models. On the other hand, general decomposition formula allowed us to understand the key terms contributing to the option fair value under specific models and hence this theoretical result has also its practical impact. 

In Section \ref{sec:implied}, we have obtained an approximated volatility surface under SVJ models and we provided a boundary case simplification for ATM options. In particular, we have studied the approximation in the \cite{Bates96} model case. A numerical comparison of this approximation is also presented. 

Although the generic approach covers various interesting SVJ models, there are other models that do not fit into the general structure described in Section \ref{sec:preliminaries}. For these models, such as \cite{BarndorffNielsen01} model or infinite activity jumps models, we still might be able to derive a similar decomposition, that was beyond the scope of the present paper. Newly obtained results therefore give suggestions on how to derive approximation formulas for other models.

\section*{Acknowledgements}

This work was partially supported by the GACR Grant GA18-16680S Rough models of fractional stochastic volatility.
Computational resources were provided by the CESNET LM2015042 and the CERIT Scientific Cloud LM2015085, provided under the programme "Projects of Large Research, Development, and Innovations Infrastructures".

The work of Josep Vives is partially supported by Spanish grant MEC MTM 2016-76420-P. 

\appendix
\section{Appendices}\label{sec:appendix}

In the following appendices we obtain the error terms of the decomposition in Theorem \ref{Generic Approximation formula} (Appendix \ref{error_terms}), the same formulas for the SVJ model of the Heston type (Appendix \ref{error_terms_HestonSVJ}) and upper bounds for those terms using Corollary \ref{SVJ models of the Heston type Approximation} (Appendix \ref{UB_HestonSVJ}). 
\subsection{ Decomposition formulas in the general model}\label{error_terms}
In this section, we obtain the error terms for a general model.
\subsubsection{Decomposition of the term ($I_{n}$)}
The term I can be decomposed by
\begin{eqnarray*}
&&\frac{1}{8}\mathbb{E}\left[\int^{T}_{0}e^{-ru}\Gamma^{2} G_{n}(u,\tilde{X}_{u},v_{u})d[M,M]_{u}\right]-\Gamma^{2} G_{n}(0,\tilde{X}_{0},v_{0})R_{0}\\ 
&=&\frac{1}{8}\mathbb{E}\left[\int^{T}_{0}e^{-ru}\Gamma^{4} G_{n}(u,\tilde{X}_{u},v_{u})R_{u}d[M,M]_{u}\right]\\ 
&+&\frac{\rho}{2}\mathbb{E}\left[\int^{T}_{0}e^{-ru}\Lambda \Gamma^{3} G_{n}(u,\tilde{X}_{u},v_{u})R_{u}\sigma_{u}d[W,M]_{u}\right]\\ 
&+&\rho\mathbb{E}\left[\int^{T}_{0}e^{-ru}\Lambda\Gamma^{2} G_{n}(u,\tilde{X}_{u},v_{u})\sigma_{u} d[W,R]_{u}\right]\\ 
&+&\frac{1}{2}\mathbb{E}\left[\int^{T}_{0}e^{-ru}\Gamma^{3} G_{n}(u,\tilde{X}_{u},v_{u})d[M,R]_{u}\right].
\end{eqnarray*}

\subsubsection{Decomposition of the term ($II_{n}$)}
The term II can be decomposed by
\begin{eqnarray*}
&&\frac{\rho}{2}\mathbb{E}\left[\int^{T}_{0}e^{-ru}\Lambda \Gamma G_{n}(u,\tilde{X}_{u},v_{u})\sigma_{u}d[W,M]_{u}\right]-\Lambda \Gamma G_{n}(0,\tilde{X}_{0},v_{0})U_{0}\\
&=&\frac{1}{8}\mathbb{E}\left[\int^{T}_{0}e^{-ru}\Lambda \Gamma^{3} G_{n}(u,\tilde{X}_{u},v_{u})U_{u}d[M,M]_{u}\right]\\ 
&+&\frac{\rho}{2}\mathbb{E}\left[\int^{T}_{0}e^{-ru}\Lambda^{2} \Gamma^{2} G_{n}(u,\tilde{X}_{u},v_{u})U_{u}\sigma_{u}d[W,M]_{u}\right]\\ 
&+&\rho\mathbb{E}\left[\int^{T}_{0}e^{-ru}\Lambda^{2} \Gamma G_{n}(u,\tilde{X}_{u},v_{u})\sigma_{u} d[W,U]_{u}\right]\\ 
&+&\frac{1}{2}\mathbb{E}\left[\int^{T}_{0}e^{-ru}\Lambda \Gamma^{2} G_{n}(u,\tilde{X}_{u},v_{u})d[M,U]_{u}\right].
\end{eqnarray*}

\subsection{Decomposition formulas in the general model for the SVJ models of the Heston type}\label{error_terms_HestonSVJ}

In this section, we obtain the error terms for the SVJ models of the Heston type.

\subsubsection{Decomposition of the term ($I_{n}$) in the SVJ models of the Heston type}
The term I can be decomposed by
\begin{eqnarray*}
&&\frac{1}{8}\mathbb{E}\left[\int^{T}_{0}e^{-ru}\Gamma^{2} G_{n}(u,\tilde{X}_{u},v_{u})d[M,M]_{u}\right]-
\frac{\nu^{2}}{8}\Gamma^{2} G_{n}(0,\tilde{X}_{0},v_{0})\left(\int^{T}_{0}\mathbb{E}\left(\sigma^2_s\right) \varphi(s)^{2} ds\right)\\ 
&=&\frac{\nu^{4}}{64}\mathbb{E}\left[\int^{T}_{0}e^{-ru}\Gamma^{4} G_{n}(u,\tilde{X}_{u},v_{u})\left(\int^{T}_{u}\mathbb{E}_{u}\left(\sigma^2_s\right) \varphi(s)^{2} ds\right)\sigma^{2}_{u}\varphi^{2}(u) du\right]\\ 
&+&\frac{\rho\nu^{3}}{16}\mathbb{E}\left[\int^{T}_{0}e^{-ru}\Lambda \Gamma^{3} G_{n}(u,\tilde{X}_{u},v_{u})\left(\int^{T}_{u}\mathbb{E}_{u}\left(\sigma^2_s\right) \varphi(s)^{2} ds\right)\sigma^{2}_{u}\varphi(u)du\right]\\ 
&+&\frac{\rho\nu^3}{8}\mathbb{E}\left[\int^{T}_{0}e^{-ru}\Lambda\Gamma^{2} G_{n}(u,\tilde{X}_{u},v_{u})\left(\int^{T}_{u} e^{-\kappa(z-u)}\varphi(z)^2dz\right) \sigma^{2}_{u} du\right]\\ 
&+&\frac{\nu^4}{16}\mathbb{E}\left[\int^{T}_{0}e^{-ru}\Gamma^{3} G_{n}(u,\tilde{X}_{u},v_{u})\left(\int^{T}_{u} e^{-\kappa(z-u)}\varphi(z)^2dz\right) \varphi(u)\sigma^{2}_{u}du\right].
\end{eqnarray*}

\subsubsection{Decomposition of the term ($II_{n}$) in the SVJ models of the Heston type}
The term II can be decomposed by
\begin{eqnarray*}
&&\frac{\rho}{2}\mathbb{E}\left[\int^{T}_{0}e^{-ru}\Lambda \Gamma G_{n}(u,\tilde{X}_{u},v_{u})\sigma_{u}d[W,M]_{u}\right]-\frac{\rho\nu}{2}\Lambda \Gamma G_{n}(0,\tilde{X}_{0},v_{0})\left(\int^{T}_{0}\mathbb{E} \left(\sigma_{s}^{2}\right)  \varphi(s) ds\right)\\
&=&\frac{\rho\nu^{3}}{16}\mathbb{E}\left[\int^{T}_{0}e^{-ru}\Lambda \Gamma^{3} G_{n}(u,\tilde{X}_{u},v_{u})\left(\int^{T}_{u}\mathbb{E}_{u} \left(\sigma_{s}^{2}\right)  \varphi(s) ds\right)\sigma^{2}_{u}\varphi(u)^{2} du\right]\\ 
&+&\frac{\rho^{2}\nu^{2}}{4}\mathbb{E}\left[\int^{T}_{0}e^{-ru}\Lambda^{2} \Gamma^{2} G_{n}(u,\tilde{X}_{u},v_{u})\left(\int^{T}_{u}\mathbb{E}_{u} \left(\sigma_{s}^{2}\right)  \varphi(s) ds\right)\sigma^{2}_{u}\varphi(u) du\right]\\ 
&+&\frac{\rho^{2} \nu^2}{2}\mathbb{E}\left[\int^{T}_{0}e^{-ru}\Lambda^{2} \Gamma G_{n}(u,\tilde{X}_{u},v_{u})\left(\int^{T}_{u} e^{-\kappa(z-u)}\varphi(z)dz\right)\sigma^{2}_{u} du\right]\\ 
&+&\frac{\rho \nu^3}{4}\mathbb{E}\left[\int^{T}_{0}e^{-ru}\Lambda \Gamma^{2} G_{n}(u,\tilde{X}_{u},v_{u})\left(\int^{T}_{u} e^{-\kappa(z-u)}\varphi(z)dz\right)\sigma^{2}_u\varphi(u) du\right].
\end{eqnarray*}

\subsection{Upper-Bound of decomposition formulas in the SVJ models of the Heston type}\label{UB_HestonSVJ}
In this section, we obtain the upper-bounds for the SVJ models of the Heston type.
\subsubsection{Upper-Bound of the term ($I_{n}$) in the SVJ models of the Heston type}
We can re-write the decomposition formula as
\begin{eqnarray*}
&&\frac{1}{8}\mathbb{E}\left[\int^{T}_{0}e^{-r(u-t)}\Gamma^{2} G_{n}(u,\tilde{X}_{u},v_{u})d[M,M]_{u}\right]-
\frac{\nu^{2}}{8}\Gamma^{2} G_{n}(0,\tilde{X}_{0},v_{0})\left(\int^{T}_{0}E\left(\sigma^2_s\right) \varphi(s)^{2} ds\right)\\ 
&=&\frac{\nu^{4}}{64}\mathbb{E}\left[\int^{T}_{0}e^{-ru}\left(\partial^{6}_{x}-3\partial^{5}_{x}+3\partial^{4}_{x}- \partial^{3}_{x}\right)\Gamma G_{n}(u,\tilde{X}_{u},v_{u})\left(\int^{T}_{u}\mathbb{E}_{u}\left(\sigma^2_s\right) \varphi(s)^{2} ds\right)\sigma^{2}_{u}\varphi^{2}(u) du\right]\\ 
&+&\frac{\rho\nu^{3}}{16}\mathbb{E}\left[\int^{T}_{0}e^{-ru} \left(\partial^{5}_{x}-2\partial^{4}_{x}+\partial^{3}_{x}\right) \Gamma G_{n}(u,\tilde{X}_{u},v_{u})\left(\int^{T}_{u}\mathbb{E}_{u}\left(\sigma^2_s\right) \varphi^{2}(s) ds\right)\sigma^{2}_{u}\varphi(u)du\right]\\ 
&+&\frac{\rho\nu^3}{8}\mathbb{E}\left[\int^{T}_{0}e^{-ru}\left(\partial^{3}_{x}-\partial^{2}_{x}\right) \Gamma G_{n}(u,\tilde{X}_{u},v_{u})\left(\int^{T}_{u} e^{-\kappa(z-u)}\varphi(z)^2dz\right) \sigma^{2}_{u} du\right]\\ 
&+&\frac{\nu^4}{16}\mathbb{E}\left[\int^{T}_{0}e^{-ru}\left(\partial^{4}_{x}-2\partial^{3}_{x}+\partial^{2}_{x}\right)\Gamma G_{n}(u,\tilde{X}_{u},v_{u})\left(\int^{T}_{u} e^{-\kappa(z-u)}\varphi(z)^2dz\right) \varphi(u)\sigma^{2}_{u}du\right].
\end{eqnarray*}
Applying Lemma \ref{lemaclau} and defining $a_{u}:=v_{u}\sqrt{T-u}$, we obtain
\begin{eqnarray*}
&&\left|\frac{1}{8}\mathbb{E}\left[\int^{T}_{0}e^{-ru}\Gamma^{2} G_{n}(u,\tilde{X}_{u},v_{u})d[M,M]_{u}\right]-
\frac{\nu^{2}}{8}\Gamma^{2} G_{n}(0,\tilde{X}_{0},v_{0})\left(\int^{T}_{0}\mathbb{E}\left(\sigma^2_s\right) \varphi(s)^{2} ds\right)\right|\\ 
&\leq&C\frac{\nu^{4}}{64}\mathbb{E}\left[\int^{T}_{0}e^{-ru}\left(\frac{1}{a^7_{u}}+\frac{3}{a^6_{u}}+\frac{3}{a^5_{u}}+\frac{1}{a^4_{u}}\right)v^{2}_{u} (T-u) \varphi(u)^{4}\sigma^{2}_{u} du\right]\\ 
&+&C\frac{\left|\rho\right|\nu^{3}}{16}\mathbb{E}\left[\int^{T}_{0}e^{-ru} \left(\frac{1}{a^6_{u}}+\frac{2}{a^5_{u}}+\frac{1}{a^4_{u}}\right)v^{2}_{u} (T-u) \varphi(u)^{3}\sigma^{2}_{u}du\right]\\ 
&+&C\frac{\left|\rho\right|\nu^3}{8}\mathbb{E}\left[\int^{T}_{0}e^{-ru}\left(\frac{1}{a^4_{u}}+\frac{1}{a^3_{u}}\right)\sigma^{2}_{u}\varphi(u)^3  du\right]\\ 
&+&C\frac{\nu^4}{16}\mathbb{E}\left[\int^{T}_{0}e^{-ru}\left(\frac{1}{a^5_{u}}+\frac{2}{a^4_{u}}+\frac{1}{a^3_{u}}\right) \varphi(u)^{4}\sigma^{2}_{u}du\right].
\end{eqnarray*}
Now, using Lemma \ref{tercerlema} (ii), we have  
\begin{eqnarray*}
&&\left|\frac{1}{8}\mathbb{E}\left[\int^{T}_{0}e^{-ru}\Gamma^{2} G_{n}(u,\tilde{X}_{u},v_{u})d[M,M]_{u}\right]-
\frac{\nu^{2}}{8}\Gamma^{2} G_{n}(0,\tilde{X}_{0},v_{0})\left(\int^{T}_{0}\mathbb{E}\left(\sigma^2_s\right) \varphi^{2}(s) ds\right)\right|\\ 
&\leq&C\frac{\nu^{4}}{64}\mathbb{E}\left[\int^{T}_{0}e^{-ru}\left(\frac{1}{a^7_{u}}+\frac{3}{a^6_{u}}+\frac{3}{a^5_{u}}+\frac{1}{a^4_{u}}\right)v^{4}_{u} (T-u)^{2} \varphi(u)^{3} du\right]\\ 
&+&C\frac{\left|\rho\right|\nu^{3}}{16}\mathbb{E}\left[\int^{T}_{0}e^{-ru} \left(\frac{1}{a^6_{u}}+\frac{2}{a^5_{u}}+\frac{1}{a^4_{u}}\right)v^{4}_{u} (T-u)^{2} \varphi(u)^{2}du\right]\\ 
&+&C\frac{\left|\rho\right|\nu^3}{8}\mathbb{E}\left[\int^{T}_{0}e^{-ru}\left(\frac{1}{a^4_{u}}+\frac{1}{a^{3}_{u}}\right)v^{2}_{u}(T-u)\varphi(u)^2 du\right]\\ 
&+&C\frac{\nu^4}{16}\mathbb{E}\left[\int^{T}_{0}e^{-ru}\left(\frac{1}{a^5_{u}}+\frac{2}{a^4_{u}}+\frac{1}{a^{3}_{u}}\right) \varphi(u)^{3}v^{2}_{u}(T-u)du\right].
\end{eqnarray*}
Finally, applying Lemma \ref{tercerlema} (i), we find that 
\begin{eqnarray*}
&&\left|\frac{1}{8}\mathbb{E}\left[\int^{T}_{0}e^{-ru}\Gamma^{2} G_{n}(u,\tilde{X}_{u},v_{u})d[M,M]_{u}\right]-
\frac{\nu^{2}}{8}\Gamma^{2} G_{n}(0,\tilde{X}_{0},v_{0})\left(\int^{T}_{0}\mathbb{E}\left(\sigma^2_s\right) \varphi(s)^{2} ds\right)\right|\\ 
&\leq&C\frac{\nu^{4}}{64}\mathbb{E}\left[\int^{T}_{0}e^{-ru}\left(\frac{2\sqrt{2}}{\theta \kappa\sqrt{\theta \kappa}} + \frac{6}{\theta \kappa^{2}} + \frac{3\sqrt{2}}{\kappa^{2}\sqrt{\theta \kappa}} + \frac{1}{\kappa^{3}}\right)   du\right]\\ 
&+&C\frac{\left|\rho\right|\nu^{3}}{16}\mathbb{E}\left[\int^{T}_{0}e^{-ru} \left(\frac{2}{\theta \kappa} + \frac{2\sqrt{2}}{\kappa\sqrt{\theta \kappa}} + \frac{1}{\kappa^{2}}\right) du\right]\\ 
&+&C\frac{\left|\rho\right|\nu^3}{8}\mathbb{E}\left[\int^{T}_{0}e^{-ru}\left(\frac{2}{\theta \kappa} + \frac{\sqrt{2}}{\kappa\sqrt{\theta \kappa}}\right) du\right]\\ 
&+&C\frac{\nu^4}{16}\mathbb{E}\left[\int^{T}_{0}e^{-ru}\left(\frac{2\sqrt{2}}{\theta \kappa\sqrt{\theta \kappa}} + \frac{4}{\theta \kappa^{2}} + \frac{\sqrt{2}}{\kappa^{2}\sqrt{\theta \kappa}}\right) du\right].
\end{eqnarray*}
Then we have that
\begin{eqnarray*}
&&\left|\frac{1}{8}\mathbb{E}\left[\int^{T}_{0}e^{-ru}\Gamma^{2} G_{n}(u,\tilde{X}_{u},v_{u})d[M,M]_{u}\right]-
\frac{\nu^{2}}{8}\Gamma^{2} G_{n}(0,\tilde{X}_{0},v_{0})\left(\int^{T}_{0}\mathbb{E}\left(\sigma^2_s\right) \varphi(s)^{2} ds\right)\right|\\ 
&\leq&C\frac{\nu^{4}}{64}\left(\frac{2\sqrt{2}}{\theta \kappa\sqrt{\theta \kappa}} + \frac{6}{\theta \kappa^{2}} + \frac{3\sqrt{2}}{\kappa^{2}\sqrt{\theta \kappa}} + \frac{1}{\kappa^{3}}\right)\left(\int^{T}_{0}e^{-ru}   du\right)\\ 
&+&C\frac{\left|\rho\right|\nu^{3}}{16}\left(\frac{2}{\theta \kappa} + \frac{2\sqrt{2}}{\kappa\sqrt{\theta \kappa}} + \frac{1}{\kappa^{2}}\right)\left(\int^{T}_{0}e^{-ru}  du\right)\\ 
&+&C\frac{\left|\rho\right|\nu^3}{8}\left(\frac{2}{\theta \kappa} + \frac{\sqrt{2}}{\kappa\sqrt{\theta \kappa}}\right)\left(\int^{T}_{0}e^{-ru} du\right)\\ 
&+&C\frac{\nu^4}{16}\left(\frac{2\sqrt{2}}{\theta \kappa\sqrt{\theta \kappa}} + \frac{4}{\theta \kappa^{2}} + \frac{\sqrt{2}}{\kappa^{2}\sqrt{\theta \kappa}}\right)\left(\int^{T}_{0}e^{-ru} du\right).
\end{eqnarray*}
Using the fact that $\int_{t}^{T}e^{-ru}ds\leq \frac{1}{r}\wedge T$, we conclude that
\begin{eqnarray*}
&&\left|\frac{1}{8}\mathbb{E}\left[\int^{T}_{0}e^{-ru}\Gamma^{2} G_{n}(u,\tilde{X}_{u},v_{u})d[M,M]_{u}\right]-
\frac{\nu^{2}}{8}\Gamma^{2} G_{n}(0,\tilde{X}_{0},v_{0})\left(\int^{T}_{0}\mathbb{E}\left(\sigma^2_s\right) \varphi(s)^{2} ds\right)\right|\\ 
&\leq&  \nu^3\left(\left|\rho\right| + \nu\right)\left( \frac{1}{r}\wedge T\right) \Pi_{1}(\kappa,\theta)
\end{eqnarray*}
where $\Pi_{1}$ is a positive function.

\subsubsection{Upper-Bound of the term ($II_{n}$) in the SVJ models of the Heston type}
We can re-write the decomposition formula as
\begin{eqnarray*}
&&\frac{\rho}{2}\mathbb{E}\left[\int^{T}_{0}e^{-ru}\Lambda \Gamma G_{n}(u,\tilde{X}_{u},v_{u})\sigma_{u}d[W,M]_{u}\right]-\frac{\rho\nu}{2}\Lambda \Gamma G_{n}(0,\tilde{X}_{0},v_{0})\left(\int^{T}_{0}\mathbb{E} \left(\sigma_{s}^{2}\right)  \varphi(s) ds\right)\\
&=&\frac{\rho\nu^{3}}{16}\mathbb{E}\left[\int^{T}_{0}e^{-ru}\left(\partial^{5}_{x}-2\partial^{4}_{x}+\partial^{3}_{x}\right) \Gamma G_{n}(u,\tilde{X}_{u},v_{u})\left(\int^{T}_{u}\mathbb{E}_{u} \left(\sigma_{s}^{2}\right)  \varphi(s) ds\right)\sigma^{2}_{u}\varphi^{2}(u) du\right]\\ 
&+&\frac{\rho^{2}\nu^{2}}{4}\mathbb{E}\left[\int^{T}_{0}e^{-ru}\left(\partial^{4}_{x}-\partial^{3}_{x}\right) \Gamma G_{n}(u,\tilde{X}_{u},v_{u})\left(\int^{T}_{u}\mathbb{E}_{u} \left(\sigma_{s}^{2}\right)  \varphi(s) ds\right)\sigma^{2}_{u}\varphi(u) du\right]\\ 
&+&\frac{\rho^{2} \nu^2}{2}\mathbb{E}\left[\int^{T}_{0}e^{-ru}\partial^{2}_{x}\Gamma G_{n}(u,\tilde{X}_{u},v_{u})\left(\int^{T}_{u} e^{-\kappa(z-u)}\varphi(z)dz\right)\sigma^{2}_{u} du\right]\\ 
&+&\frac{\rho \nu^3}{4}\mathbb{E}\left[\int^{T}_{0}e^{-ru}\left(\partial^{3}_{x}-\partial^{2}_{x}\right)\Gamma G_{n}(u,\tilde{X}_{u},v_{u})\left(\int^{T}_{u} e^{-\kappa(z-u)}\varphi(z)dz\right)\sigma^{2}_u\varphi(u) du\right]
\end{eqnarray*}
Applying Lemma \ref{lemaclau} and defining $a_{u}:=v_{u}\sqrt{T-u}$, we obtain
\begin{eqnarray*}
&&\left|\frac{\rho}{2}\mathbb{E}\left[\int^{T}_{0}e^{-ru}\Lambda \Gamma G_{n}(u,\tilde{X}_{u},v_{u})\sigma_{u}d[W,M]_{u}\right]-\frac{\rho\nu}{2}\Lambda \Gamma G_{n}(0,\tilde{X}_{0},v_{0})\left(\int^{T}_{0}\mathbb{E} \left(\sigma_{s}^{2}\right)  \varphi(s) ds\right)\right|\\
&\leq&C\frac{\left|\rho\right|\nu^{3}}{16}\mathbb{E}\left[\int^{T}_{0}e^{-ru}\left(\frac{1}{a^6_{u}}+\frac{2}{a^5_{u}}+\frac{1}{a^4_{u}}\right)\left(\int^{T}_{u}\mathbb{E}_{u} \left(\sigma_{s}^{2}\right)  \varphi(s) ds\right)\sigma^{2}_{u}\varphi(u)^{2} du\right]\\ 
&+&C\frac{\rho^{2}\nu^{2}}{4}\mathbb{E}\left[\int^{T}_{0}e^{-ru}\left(\frac{1}{a^5_{u}}+\frac{1}{a^4_{u}}\right) \left(\int^{T}_{u}\mathbb{E}_{u} \left(\sigma_{s}^{2}\right)  \varphi(s) ds\right)\sigma^{2}_{u}\varphi(u) du\right]\\ 
&+&C\frac{\rho^{2} \nu^2}{2}\mathbb{E}\left[\int^{T}_{0}e^{-ru}\frac{1}{a^3_{u}} \left(\int^{T}_{u} e^{-\kappa(z-u)}\varphi(z)dz\right)\sigma^{2}_{u} du\right]\\ 
&+&C\frac{\left|\rho \right|\nu^3}{4}\mathbb{E}\left[\int^{T}_{0}e^{-ru}\left(\frac{1}{a^4_{u}}+\frac{1}{a^3_{u}}\right) \left(\int^{T}_{u} e^{-\kappa(z-u)}\varphi(z)dz\right)\sigma^{2}_u\varphi(u) du\right].
\end{eqnarray*}
Using Lemma \ref{tercerlema} (ii), then
\begin{eqnarray*}
&&\left|\frac{\rho}{2}\mathbb{E}\left[\int^{T}_{0}e^{-ru}\Lambda \Gamma G_{n}(u,\tilde{X}_{u},v_{u})\sigma_{u}d[W,M]_{u}\right]-\frac{\rho\nu}{2}\Lambda \Gamma G_{n}(0,\tilde{X}_{0},v_{0})\left(\int^{T}_{0}\mathbb{E} \left(\sigma_{s}^{2}\right)  \varphi(s) ds\right)\right|\\
&\leq&C\frac{\left|\rho\right|\nu^{3}}{16}\mathbb{E}\left[\int^{T}_{0}e^{-ru}\left(\frac{1}{a^6_{u}}+\frac{2}{a^5_{u}}+\frac{1}{a^4_{u}}\right)v^{4}_{u}(T-u)^{2}\varphi(u)^{2} du\right]\\ 
&+&C\frac{\rho^{2}\nu^{2}}{4}\mathbb{E}\left[\int^{T}_{0}e^{-ru}\left(\frac{1}{a^5_{u}}+\frac{1}{a^4_{u}}\right) v^{4}_{u}(T-u)^{2}\varphi(u) du\right]\\ 
&+&C\frac{\rho^{2} \nu^2}{2}\mathbb{E}\left[\int^{T}_{0}e^{-ru}\frac{1}{a^3_{u}} \varphi(u)v^{2}_{u}(T-u) du\right]\\ 
&+&C\frac{\left|\rho\right| \nu^3}{4}\mathbb{E}\left[\int^{T}_{0}e^{-ru}\left(\frac{1}{a^4_{u}}+\frac{1}{a^3_{u}}\right) \varphi(u)^{2}v^{2}_{u}(T-u) du\right].
\end{eqnarray*}
Finally, applying Lemma \ref{tercerlema} (i), we find that 
\begin{eqnarray*}
&&\left|\frac{\rho}{2}\mathbb{E}\left[\int^{T}_{0}e^{-ru}\Lambda \Gamma G_{n}(u,\tilde{X}_{u},v_{u})\sigma_{u}d[W,M]^{c}_{u}\right]-\frac{\rho\nu}{2}\Lambda \Gamma G_{n}(0,\tilde{X}_{0},v_{0})\left(\int^{T}_{0}\mathbb{E} \left(\sigma_{s}^{2}\right)  \varphi(s) ds\right)\right|\\
&\leq&C\frac{\left|\rho\right|\nu^{3}}{16}\mathbb{E}\left[\int^{T}_{0}e^{-ru}\left(\frac{2}{\theta \kappa} + \frac{2\sqrt{2}}{\kappa\sqrt{\theta \kappa}} +  \frac{1}{\kappa^{2}}\right) du\right]\\ 
&+&C\frac{\rho^{2}\nu^{2}}{4}\mathbb{E}\left[\int^{T}_{0}e^{-ru}\left(\frac{\sqrt{2}}{\sqrt{\theta \kappa}} + \frac{1}{\kappa}\right) du\right]\\ 
&+&C\frac{\rho^{2} \nu^2}{2}\mathbb{E}\left[\int^{T}_{0}e^{-ru}\frac{\sqrt{2}}{\sqrt{\theta \kappa}}  du\right]\\ 
&+&C\frac{\left|\rho\right| \nu^3}{4}\mathbb{E}\left[\int^{T}_{0}e^{-ru}\left(\frac{2}{\theta \kappa} + \frac{\sqrt{2}}{\kappa\sqrt{\theta \kappa}}\right) du\right].
\end{eqnarray*}
Then we have that
\begin{eqnarray*}
&&\left|\frac{\rho}{2}\mathbb{E}\left[\int^{T}_{0}e^{-ru}\Lambda \Gamma G_{n}(u,\tilde{X}_{u},v_{u})\sigma_{u}d[W,M]^{c}_{u}\right]-\frac{\rho\nu}{2}\Lambda \Gamma G_{n}(0,\tilde{X}_{0},v_{0})\left(\int^{T}_{0}\mathbb{E} \left(\sigma_{s}^{2}\right)  \varphi(s) ds\right)\right|\\
&\leq&C\frac{\left|\rho\right|\nu^{3}}{16}\left(\frac{2}{\theta \kappa} + \frac{2\sqrt{2}}{\kappa\sqrt{\theta \kappa}} +  \frac{1}{\kappa^{2}}\right)\left(\int^{T}_{0}e^{-ru} du\right)\\ 
&+&C\frac{\rho^{2}\nu^{2}}{4}\left(\frac{\sqrt{2}}{\sqrt{\theta \kappa}} + \frac{1}{\kappa}\right)\left(\int^{T}_{0}e^{-ru} du\right)\\ 
&+&C\frac{\rho^{2} \nu^2}{2}\frac{\sqrt{2}}{\sqrt{\theta \kappa}}\left(\int^{T}_{0}e^{-ru}  du\right)\\ 
&+&C\frac{\left|\rho\right| \nu^3}{4}\left(\frac{2}{\theta \kappa} + \frac{\sqrt{2}}{\kappa\sqrt{\theta \kappa}}\right)\left(\int^{T}_{0}e^{-ru} du\right).
\end{eqnarray*}
Using the fact that $\int_{t}^{T}e^{-ru}ds\leq \frac{1}{r}\wedge T$, we conclude that
\begin{eqnarray*}
&&\left|\frac{\rho}{2}\mathbb{E}\left[\int^{T}_{0}e^{-ru}\Lambda \Gamma G_{n}(u,\tilde{X}_{u},v_{u})\sigma_{u}d[W,M]^{c}_{u}\right]-\frac{\rho\nu}{2}\Lambda \Gamma G_{n}(0,\tilde{X}_{0},v_{0})\left(\int^{T}_{0}\mathbb{E} \left(\sigma_{s}^{2}\right)  \varphi(s) ds\right)\right|\\
&\leq&\left|\rho\right|\nu^2 \left(\left|\rho\right| + \nu\right)\left( \frac{1}{r}\wedge T\right) \Pi_{2}(\kappa,\theta)
\end{eqnarray*}
where $\Pi_{2}$ is a positive function.

\subsubsection{Upper-Bound for the terms ($I_{n}$) and ($II_{n}$) in the SVJ models of the Heston type}
We have that 
\begin{eqnarray*}
&&\left|\frac{1}{8}\mathbb{E}\left[\int^{T}_{0}e^{-ru}\Gamma^{2} G_{n}(u,\tilde{X}_{u},v_{u})d[M,M]_{u}\right]-
\frac{\nu^{2}}{8}\Gamma^{2} G_{n}(0,\tilde{X}_{0},v_{0})\left(\int^{T}_{0}\mathbb{E}\left(\sigma^2_s\right) \varphi(s)^{2} ds\right)\right|\\ 
&+&\left|\frac{\rho}{2}\mathbb{E}\left[\int^{T}_{0}e^{-ru}\Lambda \Gamma G_{n}(u,\tilde{X}_{u},v_{u})\sigma_{u}d[W,M]^{c}_{u}\right]-\frac{\rho\nu}{2}\Lambda \Gamma G_{n}(0,\tilde{X}_{0},v_{0})\left(\int^{T}_{0}\mathbb{E} \left(\sigma_{s}^{2}\right)  \varphi(s) ds\right)\right|\\
&\leq& \nu^3\left(\left|\rho\right| + \nu\right)\left( \frac{1}{r}\wedge T\right) \Pi_{1}(\kappa,\theta) +  \left|\rho\right|\nu^2 \left(\left|\rho\right| + \nu\right)\left( \frac{1}{r}\wedge T\right) \Pi_{2}(\kappa,\theta)\\
&\leq& \nu^{2}\left(\left|\rho\right| + \nu\right)^{2}\left( \frac{1}{r}\wedge T\right) \Pi(\kappa,\theta).
\end{eqnarray*}
where function $\Pi$ is the maximum of functions $\Pi_{1}$ and $\Pi_{2}$.
\clearpage



\end{document}